\newcommand{\V}{\mathcal{V}}
\newcommand{\eqdef}{\coloneqq}
\newcommand{\A}{\mathcal{A}}
\renewcommand{\L}{\mathcal{L}}
\newcommand{\N}{\mathcal{N}}
\newcommand{\R}{\mathbb{R}}
\newcommand{\F}{\mathcal{F}}
\renewcommand{\S}{\mathcal{S}}
\newcommand{\E}{\mathbb{E}}
\renewcommand{\P}{\mathbb{P}}
\newcommand{\uv}{\underline v}
\newcommand\given[1][]{%
\nonscript\:#1\vert
\allowbreak
\nonscript\:
\mathopen{}}
\newcommand{\tu}{\tilde u}
\newcommand{\tp}{\tilde p}
\newcommand{\tpi}{\tilde \pi}
\newtheorem{assumption}{Assumption}
\newtheorem{theorem}{Theorem}[section]
\newtheorem{lemma}[theorem]{Lemma}
\newtheorem{proposition}[theorem]{Proposition}
\theoremstyle{definition}
\newtheorem{definition}[theorem]{Definition}
\theoremstyle{remark}
\newtheorem{example}[theorem]{Example}
\definecolor{DarkGreen}{rgb}{0.15,0.5,0.15}
\definecolor{DarkRed}{rgb}{0.6,0.2,0.2}
\definecolor{DarkBlue}{rgb}{0.15,0.15,0.55}
\definecolor{DarkPurple}{rgb}{0.4,0.2,0.4}
\DeclarePairedDelimiter\set\{\}
\newcommand{\ind}{\mathbf{1}}
\pgfplotsset{compat=1.17}
\tikzset{>=stealth}
\tikzstyle{info}=[ultra thin,black!15]
\newcommand{\pmax}{P_{\max}}
\def\and{%                  % \begin{tabular}[t]{c}
  \end{tabular}%
  \hskip 0.3em \@plus.17fil%
  \begin{tabular}[t]{c}}%   % \end{tabular}
\title{Selling Information in Competitive Environments\thanks{Bonatti acknowledges financial support through NSF Grant SES\textendash1948692. We thank Maryann Rui and Stephen Morris for helpful comments, and seminar participants at UChicago, UIUC, INFORMS, Harvard, and MIT.}}
\author{Alessandro Bonatti\thanks{Massachusetts Institute of Technology, Sloan
School of Management} \and Munther Dahleh\thanks{Massachusetts Institute of
Technology, Institute for Data, Systems, and Society} \and Thibaut
Horel\footnotemark[2] \and Amir Nouripour\footnotemark[2]}
\date{}
\begin{document}

\maketitle

\begin{abstract}
Data buyers compete in a game of incomplete information about which a single data seller owns some payoff-relevant information. The seller faces a joint information- and mechanism-design problem: deciding which information to sell, while eliciting the buyers' types and imposing payments. We derive the welfare- and revenue-optimal mechanisms for a class of games with binary actions and states. Our results highlight the critical properties of selling information in competitive environments: (i) the negative externalities arising from buyer competition increase the profitability of recommending the correct action to one buyer exclusively; (ii) for the buyers to follow the seller's recommendations, the degree of exclusivity must be limited; (iii) the buyers' obedience  constraints also limit the distortions in the allocation of information introduced by a monopolist seller; (iv) as competition becomes fiercer, these limitations become more severe, weakening the impact of market power on the allocation of information.\bigskip

\noindent\textbf{Keywords:} Data, Competition, Screening, Information Design, Externalities.\bigskip

\noindent\textbf{JEL Codes:} D43, D82, D83.\bigskip
\end{abstract}

\newpage

\section{Introduction}

Markets for information shape a growing fraction of the economy. Information is sold directly (e.g., credit bureaus sell consumer scores to lenders, and research institutions sell data to financial traders) but also indirectly (e.g., digital platforms offer advertisers access to a targeted audience, and hedge funds sell shares of the portfolios they build based on superior information).\footnote{We describe the targeted advertising example at length below. See \citet{adpf90} and \citet{BB19} for a discussion of direct vs.\ indirect sales of information.} The allocation of information affects the distribution of market power in the downstream markets where that information is used, thereby critically impacting consumer and social surplus. Understanding how these markets work, and what is special about them, is then a first-order economic and social issue.
%For example, it determines the precision of demand information held by online merchants or voter information held by different electoral campaigns. 

In this paper, we study how private information and buyer competition interact in determining the optimal allocation and price of information. Our objective is threefold: (i)~to provide qualitative insights into the structure of the revenue-maximizing mechanisms for the sale of information, (ii) to determine how information differs from physical goods in this respect, and (iii) to assess the impact of market power in the information sector on competition in downstream markets. 
% Our goal is to answer a range of economically relevant questions, including the following:
% % \begin{itemize}
% % 	\item What is the structure of the revenue-maximizing mechanisms?
% %  \item How does it differ from the case of physical goods and services?
% %  \item How does it depend on the intensity of the downstream competition?
% %  \item How does the profitability of selling information relate to its added value downstream?
% % 		%\item How does the information being sold affect the equilibrium strategies and payoffs in the downstream game?
% % 	%	\item How does the provision of information to competing buyers translate into revenue for the seller, and into social welfare? In other words, can we relate the value of data for the seller to in terms of its added value in the downstream	competition?
% % \end{itemize}
We propose a tractable formulation of this problem where the profitability of acquiring information for any buyer is unknown to the seller (e.g., buyers have private cost, asset holdings, risk preferences), and buyers of information compete with one another (e.g., financial traders, advertisers, lenders). 
We cast the monopolist seller's choice of a mechanism for the sale of information as an \emph{information design} problem \emph{with elicitation} \citep{BM19}. We consider finitely many data buyers and a data seller. The data buyers compete in a simultaneous-moves, finite game of incomplete information (the ``downstream game''). The monopolist is informed about a payoff-relevant state variable and sells informative signals to the buyers. Each buyer also has a payoff type in the downstream game, i.e., their willingness to pay for any signal is their private information. Thus, the seller must first elicit the buyers' private payoff types and then sell them informative signals. As these signals can be viewed as action recommendations, the seller faces a joint mechanism and information design problem, wherein their choice of information structure is subject to the buyers' obedience and truthful reporting  constraints.

More specifically, a direct mechanism  maps the state of the world and the buyers' reported types into a distribution over informative signals and payments. An important property of our setting is that the seller can design any statistical experiment but lacks complete control over the buyers' actions. This is because information is only valuable insofar as it affects behavior \citep{blackwell53}, and the buyers retain control over their downstream actions. Likewise, the seller can design the information revealed to any buyer  when one or more buyers do not participate in the mechanism. However, the seller cannot prevent any buyer from playing in the downstream game under their prior information only. Therefore, the designer has partial but not full control over each buyer's outside options, which  partially relaxes the buyers' participation constraints.
%In this game, receiving an informative signal imposes a negative externality on one's competitors. This only happens, however, if that signal is followed, i.e., if the mechanism is obedient.

%These signals can be viewed as action recommendations, e.g., what product to show the consumer. As such, the platform's problem is subject to two types of constraints simultaneously: first, the merchant must find it worthwhile to follow the platform's recommendation (obedience constraints); second, the merchant must choose an option from the platform's menu that correctly reveals its own preferences (truthful reporting  constraints).

\paragraph{Main Results} 

We start with a characterization of the seller's constraints in \Cref{sec:ic}. This characterization holds whenever the buyers' payoffs are linear in their private type. For such payoff structures, we show that incentive compatibility of the mechanism is equivalent to \emph{separately} incentivizing truthful reporting and obedience of the buyers. In other words, double deviations—wherein a buyer both misreports their type and deviates from the seller's action recommendation—are no more profitable to the buyers than one-shot deviations.

Next, we focus on the simplest instance of this complex problem---a binary-action downstream game of incomplete information where the state identifies which action is dominant for each data buyer. In this game, acquiring information  imposes negative externalities on the other buyers: the better informed a buyer is about the state, the \emph{lower} the resulting payoff for the other buyers. In other words, the seller designs a  mechanism in the presence of externalities stemming from the competition among buyers \citep{JM06}.

%For this game, we show that the obedience constraint on the seller's recommendation takes a simple form: conditioned on their type, each buyer must be recommended the dominant action with probability at least as large the probability of the most likely state under the prior. Intuitively, in the absence of any recommendation from the seller, the buyer's dominant strategy is to play the action matching the most likely state under the prior. Obedience thus expresses that the seller's recommendation must be at least as informative as the buyer's prior belief.

%We then restrict attention to a binary-state and binary action downstream game with the linear payoff structure described above. \tbo{maybe to add: 

%We characterize the constraints on the mechanism designer's problem and study the relationship between incentive compatibility and obedience, which hold beyond our specific binary game.
We then turn to the welfare-optimal mechanism for the allocation of information and the revenue-maximizing mechanism for a monopolist seller. Our results highlight two defining features of selling information to competing buyers and show how information and competition interact in shaping the optimal mechanism. Both features distinguish the sale of information to competing firms from the sale of physical goods with externalities across buyers (e.g., network goods).

First, any buyer can always ignore (or indeed reverse) the seller's recommendation. The resulting \emph{obedience} constraint limits the social planner's ability to reveal information to one buyer \emph{exclusively}. Likewise, obedience disciplines the monopolist seller's ability to distort the allocation of information to maximize revenue at the expense of social welfare. Intuitively, the seller wants to distort the allocation of any buyer type with a negative Myersonian virtual value to minimize her payoff and reduce the information rents of higher types. 

In our setting, this distortion corresponds to recommending the wrong action in every state. However, the buyer would not follow such a recommendation in any mechanism that does so too often. Therefore, the seller can do no better than to reveal ``zero net information'' to a low-value buyer, i.e., to probabilistically send the right and the wrong recommendation in a way that leaves the buyer indifferent over any course of action.

There are, of course, many such information structures (including entirely uninformative ones). However, the seller is not indifferent among them. Indeed, she can tailor the joint distribution of recommendations to maximize welfare while maintaining obedience on aggregate. The seller then prefers to reveal the correct state to all buyers when their types are sufficiently low. This approach relaxes obedience constraints and allows the seller to issue the correct recommendation to one or more buyers \emph{exclusively} (and the wrong recommendation to the remaining buyers) when their types are sufficiently larger than their competitors'.

Second, providing information to a firm naturally imposes a negative externality on its competitors. In our setting, these negative externalities expand the profitability of selling information. Each buyer is willing to pay a positive price if either (a) she is strictly better off following the seller's recommendation or (b) her opponents do not receive the correct recommendation with probability one. As a result, the seller can charge a strictly positive price to some types with negative Myersonian virtual values, including some types whose obedience constraint binds, i.e., those who do not receive any valuable information themselves. 

%Unlike cases when the seller controls access to the consumer, so\dots\dots

\paragraph{Leading Example} 

Large digital platforms (e.g., Amazon, Facebook, and Google in the US, Alibaba and JD in China) collect an ever-increasing amount of information on their users' online behavior (e.g., browsing, shopping, social media interactions), which allows them to precisely estimate individual consumers' tastes for various products. Our leading application (fleshed out in \Cref{ex:binary}) considers the interaction between a digital platform (information seller) and two or more merchants (information buyers). The merchants wish to leverage the platform's information advantage to offer a personalized  product to each individual consumer.
%\tbo{This might be needlessly general, but our setting can also model situations in which merchants sell their products not only to a single consumer but to a \emph{population} of consumers. In this case, all the parameters in our game should be understood as representing “average” preferences of consumers in the downstream market.}

The platform monetizes its proprietary information by selling  \emph{targeted} advertising slots (e.g., Facebook, Google) or sponsored marketplace listings (e.g., Amazon) to advertisers and retailers. Such practices amount to \emph{indirect sales of information}: while the platform does not trade its consumers' data for payment (\emph{direct sales}), it nonetheless  creates value for merchants by allowing them to condition their strategies (in particular, which product to offer) on the consumers' information (e.g., their browsing or shopping history and third-party cookies). For the purposes of our model, direct transfers of information and indirect sales of information are, in fact, equivalent.\footnote{In our approach, we further assume a platform such as Amazon has full commitment power to set information structures. Recent work by \cite{kosk22} analyzes the problem of information disclosure to multiple agents by a designer without commitment power.}

Each merchant's expected volume of sales depends on two critical factors: (i) the degree of targeting, i.e., the precision of its advertising campaign, as measured by the ability to show the right product to each consumer; and (ii) the exclusivity of its campaign, i.e., the mismatch between its competitors' offers and the consumer's tastes. Merchants are willing to pay more for targeted campaigns and even more for exclusive access to targeted campaigns.

However, the merchants' willingness to pay for an advertising campaign also depends on the profitability of making each sale, i.e., on their cost structure. As the latter is privately known to the merchant, the platform must elicit it through its choice of mechanism. Abstracting from the details and dynamics of online advertising auctions, the platform's problem reduces to designing a menu of (information structure, payment) pairs, each corresponding to an advertising campaign.% with a certain degree of precision and exclusivity.

\paragraph{Related Literature} This paper is primarily related to the literature on markets for information. In seminal work, \cite{adpf86,adpf90} study the sale of information to traders who compete in financial markets. More recently, \cite{BA12} and \cite{BBS18} study settings where a single buyer has private information about her beliefs over a payoff-relevant state. This problem is similar to ours insofar as the optimal mechanism can be represented through menus of information structures and associated prices, but there is  a  single buyer only.%, the externalities central to our analysis do not arise in these papers. 

Closer to our  model, \cite{RO21} studies fully general mechanisms in a model with binary actions, states, and types. However, the buyers' types correspond to the realizations of a privately observed, exogenous signal about the state. \cite{BCT19} also study a setting  similar to ours but consider mechanisms with a single option only---selling the true state distorted by Gaussian noise.  Their problem then consists of finding the optimal covariance matrix of the noise and the associated prices. In particular, the covariance matrix is not designed as a function of the buyers' private types.\footnote{\cite{XS13} study information sellers who offer selling \emph{exogenous} information structures about a binary state to buyers with \emph{known} types who compete in a game with binary actions. \cite{KPP18} study the sale of  cost information to a large number of perfectly competitive firms, each one facing a privately informed manager. \cite{BDW21} consider two sellers who acquire information about a consumer's location along a Hotelling line.}

Our work is also related to the recent literature on Bayesian persuasion and information design, e.g., \cite{BM16}, \cite{KA19}, \cite{T19}, and the references therein. Most papers in this literature view the the problem as a pure information design question as opposed to a mechanism design problem with transfers. In particular, these papers do not study the information structures that maximize the designer's revenue.

In a seminal paper, \cite{jems96} study an auction setting with multidimensional and interdependent valuations. Each buyer is privately informed about her valuation for a good and about the externality that she imposes on others. They show that the revenue-maximizing mechanism may involve not selling the good at all (when this is socially optimal) while  charging positive payments to the buyers.\footnote{\cite{jems99} study the simpler problem where each buyer knows the externality imposed on her by others receiving an object. Under appropriate symmetry assumptions, the problem reduces to a one-dimensional mechanism-design problem. In this vein, \cite{ossa21} study a screening model with externalities where each buyer's type affects both her valuation (e.g., for a network good) and the influence her actions (e.g., consumption) impose on other buyers. Their analysis focuses on the countervailing impact of payoff-types and influence functions.} In further work, \cite{jemo00} restrict attention to the second-price auction and study more general externalities in the game played by the buyers. See \cite{JM06} for an exhaustive survey of the literature on mechanism design with externalities.

Our analysis is also very closely related to the model of data auctions with externalities in \cite{ADHR20}. In their paper, the externalities resulting from the allocation of information are \emph{intrinsic} to the buyers---the negative marginal effect of a competing buyer acquiring information is part of each buyer's  private type. Finally, recent work by \cite{kang22} and \cite{past22} explores a mechanism-design approach to the taxation  of goods with externalities.%, in the absence of subsequent actions by the buyers.}

Relative to all these papers, our analysis highlights the differences between selling information and traditional goods in markets with \emph{endogenous} downstream externalities. In contrast to the sale of physical goods, the allocation of information is both more flexible and more constrained. On the one hand, the seller has the flexibility to design any statistical experiment for each profile of buyer types. On the other hand, information is an input into the buyers' strategic decision problem (the ``downstream game'') that the seller does not control. As such, the sale of information introduces both obedience constraints, which are new to this literature, and tighter participation constraints.

\section{Model}\label{sec:model}

We consider $n$ data buyers who compete in a downstream game of incomplete information. A monopolist data seller observes a payoff-relevant state variable and  sells informative signals to the data buyers. %A complete specification of our model thus comprises (i) the payoff functions of the data buyers in the downstream game, (ii) the information structure describing which payoff relevant parameters is observed by each buyer, and (iii) the information exchange and transfers between the data buyers and the data seller occurring prior to the downstream game.

\paragraph{Notation} For a tuple of sets $(\S_i)_{i\in[n]}$, we write
$\S=\prod_{i=1}^n \S_i$ and $\S_{-i}=\prod_{j\neq i} \S_j$. Similarly for $s\in
\S$, $s_i$ (resp. $s_{-i}$) denotes the projection of $s$ on $\S_i$ (resp.
$\S_{-i}$). Finally $\Delta(\S)$ denotes the set of probability distributions
over $\S$.

\paragraph{Downstream Game}
We consider a downstream game of incomplete information between $n$ buyers parametrized by an unknown parameter $\theta$ (the \emph{state of the world}). We denote by $\Theta$ the set of all possible states. Each buyer $i\in[n]$ is described by a set of \emph{types} $\V_i$, a set of actions $\A_i$, and a utility function $$u_i: \A\times\Theta\times\V\to\R.$$

\paragraph{Information Structures} The monopolist information seller  chooses a set of signals $\S$, and a message (bid) space $\mathcal{B}$ to design a communication rule $\sigma:\Theta\times\mathcal{B}\to\Delta(\S)$ and a payment function
$p:\Theta\times\mathcal{B}\to \R_{\geq 0}^n$. Given a vector of bids $b\in\mathcal{B}$ and state $\theta\in\Theta$, we write $\sigma(\,\cdot\,; \theta, b):\S\to[0,1]$ for the
corresponding probability distribution over $\S$.

The buyers' utility functions $(u_i)_{i\in[n]}$, the mechanism $(\sigma,p)$, as well as the joint distribution of the random variables $(\theta, V)\in \Theta\times \V$ are commonly known at the onset of the game.

\paragraph{Timing} The interaction between the information seller and the buyers, and among the buyers in the downstream game, takes place as follows:
\begin{enumerate}
	\item Each buyer $i\in[n]$ observes their type $V_i$ and the  seller observes the state $\theta$.
	\item Each buyer reports a message $B_i$ to the information seller, where $B_i$ is a $V_i$-measurable random variable in $\mathcal{B}_i$.
	\item The information seller generates signals $S\in\S$ distributed as
		$\sigma(\theta, B)$ and reveals $S_i$ to each buyer $i\in[n]$ in exchange for payment $p_i(\theta, B)$.%\footnote{This is without loss of generality because the buyer's payoff is additively separable in the downstream utility and money. It also ensures that no  information is revealed by the payments charged to each buyer.} 
	\item Each buyer $i$ chooses an action $A_i\in \A_i$ that is $(V_i,
		S_i)$-measurable and obtains a total utility of $u_i(A; \theta, V) - p_i(\theta,
		B)$.
\end{enumerate}

The above formulation reduces the problem of information sale as a joint mechanism and information design problem. In order to obtain closed-form characterizations of the welfare- and revenue-optimal mechanisms in \Cref{sec:lin}, we will restrict our attention to a specific case of this problem that we describe next. However, the characterizations of the  constraints faced by the seller, which we study in \Cref{sec:ic}, hold fore more general games.

\paragraph{Binary Game with Symmetric Additive Payoffs}% \tbo{please review this entire paragraph up to the example} 
We consider a downstream game with $n$ buyers, two states of the world, $\Theta=\{0,1\}$ and two actions for each buyer, $\A_i=\set{0,1}$ for $i\in[n]$. The utility of buyer $i\in[n]$ is given by
\begin{displaymath}
	u_i(a;\theta, v) = v_i\cdot\pi_i(a; \theta).
\end{displaymath}
In other words, the buyers have private  \emph{payoff types} that capture their marginal valuation for the downstream outcomes and reveal nothing about the state of the world.  We further assume that the random variables $(\theta, v_1, \dots, v_n)$ are drawn from a mutually independent prior. In this class of games, we assume the \emph{downstream payoff} $\pi_i$ of buyer $i\in[n]$  is given by
\begin{equation}\label{eq:pi}
	\pi_i(a;\theta) = \ind\set{a_i=\theta} - \frac\alpha{n-1}\sum_{j\neq i}\ind\set{a_j=\theta}\,.
\end{equation}
Thus, in each state of the world, it is a dominant strategy for each buyer to play the action matching that state, resulting in a payoff gain of $1$. A buyer additionally incurs a payoff loss of $\alpha/(n-1)$ whenever one of their competitors plays the action matching the state.\footnote{In the binary game, ``choosing the correct action'' is analogous to ``being awarded the object'' in an auction with externalities. In that case, the function $\pi_i$ corresponds to the value of a given allocation for any buyer $i$, which is then scaled by the buyer's type $v_i$. Throughout the paper, and especially in \Cref{sec:comp}, we will discuss significant differences between the allocation of physical and information goods.} The externalities are thus normalized in such a way that $\alpha$ parametrizes the maximum externality that can be induced on a given buyer by the other buyers.

\begin{example}[Binary Product Choice]\label{ex:binary}
	The binary game described above can be seen as a stylized formulation of the motivating example presented in the introduction, with the state $\Theta=\set{0,1}$ representing an individual consumer's preferences (unknown to the merchants). In this context, the merchants' goal is to match their product to the consumer's preferences. Each merchant is privately informed of its marginal profitability $v_i$ in the downstream market. When there are only two merchants, we can write the
	payoffs \eqref{eq:pi} for each action profile $a\in\{0,1\}^2$ in each state of the world $\theta$ as
\begin{center}
	\renewcommand{\gamestretch}{1.6}
	\gamemathtrue
	\sgcolsep=1em
	\hspace*{\fill}
\begin{game}{2}{2}[$\theta=0$]\label{payoff-matrix}
		& 0      & 1\\
		0   & 1-\alpha,1-\alpha   & 1, -\alpha\\
		1   & \phantom{1}-\alpha,1\phantom{-\alpha}   & 0,\phantom{-}0
\end{game}
\hspace*{3em}
	\begin{game}{2}{2}[$\theta=1$]
	& 0      & 1\\
		0   & 0,\phantom{-}0   & \phantom{1}-\alpha, 1\phantom{-\alpha}\\
	1   & 1,-\alpha   & 1-\alpha, 1-\alpha
\end{game}
\hspace*{\fill}
\end{center}

The parameter $\alpha\geq 0$ captures the intensity of the competition between the merchants. A special case of this game occurs when $\alpha=1$ in which case we have a zero-sum game. For $\alpha>1$, the negative externalities outweigh the direct effect of choosing the correct action: the game turns into a prisoners' dilemma.
Finally, each merchant is privately informed about its unit profit margin, i.e., merchant $i$'s total payoff (gross of any payments to the seller) is given by $v_i\cdot\pi_i$. \end{example}

\section{Incentive Compatibility and Participation}\label{sec:ic}

\subsection{Definitions}\label{sec:def}

%\tbo{review the next paragraph}
%The equilibrium concept we consider in this paper assumes that the data buyers are Bayesian rational agents, that is, they always behave in such a way as to maximize their expected utility conditioned on the information available at decision time. Recall that 

\paragraph{Incentive Compatibility} In the game described in Section~\ref{sec:model}, each data buyer makes two strategic decisions: (i) report a message  $B_i\in\mathcal{B}_i$ after observing their private type $V_i$, and (ii) take an action $A_i$ in the downstream game after receiving signal $S_i\in\S_i$ from the  seller.

By the revelation principle for dynamic games, see \citet[Section 6.3]{M91}, it is without loss of generality to assume that the seller's set of signals $\S_i$ is equal to the set of actions $\A_i$, and that the buyers' reports lie in their own type space $\V_i$ instead of a general message space $\mathcal{B}_i$, as long as we consider incentive compatible mechanisms.

In any such mechanism, the seller  recommends to the buyer an action to take in the downstream game. Henceforth, we therefore denote the seller's recommendation by $A_i$, and then buyer's choice of action by $a_i$. Incentive compatibility (below) requires each buyer to both report her true type and to follow the seller's recommendation.

\begin{definition}[Incentive Compatibility]\label{def:ic}
	A mechanism $(\sigma, p)$ is \emph{incentive compatible} if, for each
	$(v_i,v'_i)\in\V_i^2$ and for each deviation function $\delta:\A_i\to\A_i$, 
\begin{multline*}
	\E\big[u_i(A_i, A_{-i}; \theta, V)-p_i(\theta, v_i, V_{-i})\given V_i=v_i, B_i=v_i]
			\geq \\
	\E\big[u_i(\delta(A_i), A_{-i}; \theta, V)-p_i(\theta, v'_i, V_{-i})\given
		V_i=v_i, B_i=v_i'],
\end{multline*}
where $A$ is distributed as $\sigma(\theta, B_i, V_{-i})$. The input into the deviation function $\delta$ is the seller's recommended action, and the output is the buyer's chosen action after receiving the recommendation.
\end{definition}

This definition of incentive compatibility is closely related to the one of \citet[Section 3.1]{BM19} in the context of information design with elicitation (but no transfers). % Unlike in our setting, this previous definition did not consider any payment to the information designer.
 In particular, \Cref{def:ic} requires the mechanism to be robust to \emph{double deviations} in which the data buyer both misreports their private type and deviate from the seller's recommendation. This implies that the mechanism is both \emph{truthful} and \emph{obedient} as defined next.

\begin{definition}[Obedience]\label{def:obedience}
	A mechanism $(\sigma, p)$ is \emph{obedient} if buyers have no incentive to deviate
	from the action recommendation of the  seller assuming everyone
	reports their type truthfully. Formally, for each $\delta:\A_i\to\A_i$ and $v_i\in\V_i$,
	\begin{displaymath}
		\E\big[u_i(A_i, A_{-i}; \theta, V)\given V_i=v_i]
			\geq 
	    \E\big[u_i(\delta(A_i), A_{-i}; \theta, V)\given
		V_i=v_i]
    \end{displaymath}
    where $A$ is distributed as $\sigma(\theta, V)$---in particular, data buyer $i$'s report is truthful.

    Equivalently one can write obedience as:
	for each $(a_i, a_i')\in \A_i^2$
	and $v_i\in\V_i$,
	\begin{displaymath}
		\E\big[u_i(a_i, A_{-i}; \theta, V)\given V_i=v_i, A_i=a_i]
			\geq \E\big[u_i(a_i', A_{-i}; \theta, V)\given V_i=v_i, A_i=a_i]
				\,,
	\end{displaymath}
	where $A$ is distributed as $\sigma(\theta, V)$.
%	\color{red}AN: we may use $S$ instead of $A$ or add a change of notation comment after IC that A stands for the action recommendation not the action itself.\color{black}
\end{definition}

The first expression shows data buyer $i$'s strategic behavior before receiving the action recommendation when she intends to report her type in the first stage of the game. At this stage, the buyer's strategy specifies a course of action following any action recommendation from the seller. Obedience requires that no deviations  $\delta:\A_i\to\A_i$ are more profitable than obedience, i.e., the identity mapping $id:\A_i\to\A_i$.

The second expression shows data buyer $i$'s strategic behavior after receiving the action recommendation at the second stage, and expresses that no other action results in a better expected utility. As mentioned before, these two are equivalent.

The second expression (which assumes every buyer reports her type truthfully) shows that obedience is only a property of the downstream game and of the recommendation rule $\sigma$, which thus correlates the actions of the data buyers. The distribution of actions resulting from an obedient recommendation rule in a game of incomplete information is a \emph{Bayes correlated equilibrium} as defined and studied in \cite{BM16,BM19}.

\begin{definition}[Truthfulness]\label{def:truthful}
	A mechanism is \emph{truthful} if buyers have no incentive to misreport their
	type, assuming that everyone follows the seller's recommendations in the downstream game. Formally, for each $(v_i,v'_i)\in\V_i^2$,
\begin{multline*}
	\E\big[u_i(A; \theta, V)-p_i(\theta, v_i, V_{-i})\given V_i=v_i, B_i=v_i]\\
		\geq
	\E\big[u_i(A; \theta, V)-p_i(\theta, v_i', V_{-i})\given V_i=v_i, B_i=v_i']
\end{multline*}
where $A$ is distributed as $\sigma(\theta, B_i, V_{-i})$.
\end{definition}

Incentive compatibility implies both obedience and truthfulness, but the converse is not true in general. In Section~\ref{sec:ic-char}, however, we show that with independent, private payoff types and linear valuations, incentive compatibility is equivalent to  obedience and truthfulness.

\paragraph{Participation}  

The data buyers engage in downstream competition even when they acquire no information from the seller. Thus, a complete description of the mechanism must specify the recommendations sent to the participating buyers when one ore more buyers choose not to participate in the mechanism.  In that case, the data seller's recommendations to their competitors affect the non-participating buyers' utilities.

We define each buyer's bid space to be their type space $\V_i$ augmented with the special symbol `$\perp$', representing the decision not to participate. Writing $\mathcal{B}_i\eqdef \V_i\cup\set{\perp}$ for the bid space of buyer $i\in[n]$ and $\mathcal{B}\eqdef\prod_{i=1}^n \mathcal{B}_i$, the communication rule is now a function $\sigma:\Theta\times\mathcal{B}\to\Delta(\A)$ with the constraint that it only sends a recommendation to the participating buyers. In other words,
\begin{displaymath}
	\forall\theta\in\Theta,\;\forall b\in\mathcal{B},\;
	\sigma(\theta,b) \in \Delta(\textstyle\prod_{i:b_i\neq\perp} \A_i).
\end{displaymath}
Similarly, the payment function $p_i:\Theta\times\mathcal{B}\to\R_{\geq 0}$ of buyer $i\in[n]$ satisfies the constraint that $p_i(\theta, b)=0$ whenever $b_i=\perp$.

For each buyer $i\in[n]$, $\sigma$ induces a communication rule $\sigma_{i}^o:\Theta\times\V_{-i}\to\Delta(\A_{-i})$ on the remaining buyers when buyer $i$ chooses not to participate. This induced communication rule is given by,
\begin{displaymath}
	\forall\theta\in\Theta,\;\forall v_{-i}\in\V_{-i},\;
	\sigma_{i}^o(\theta,v_{-i}) \eqdef \sigma(\theta, \perp, v_{-i}).
\end{displaymath}
This communication rule determines the \emph{outside option} available to non-participating buyers: in any equilibrium where every buyer participates, any deviating buyer $i$ chooses her action in the downstream game to be the best response to $\sigma_{i}^o$, resulting in the reservation utility 
\begin{displaymath}
	\max_{a_i\in\A_i}\E\big[u_i(a_i, A_{-i}; \theta,V)\given V_i=v_i]\, ,
\end{displaymath}
where $A_{-i}$ is distributed according to $\sigma_i^o(\theta, V_{-i})$. This is in marked contrast with a monopoly without externalities, in which a non-participating buyer simply receives no allocation, resulting in a vanishing reservation utility. It is also richer than in markets for physical goods with externalities, where a non-participating buyer has no available actions to choose from. We can now state the participation constraint.

\begin{definition}[Individual Rationality]\label{def:ir}
	The mechanism $(\sigma,p)$ is individually rational each for each buyer $i\in[n]$,
    \begin{equation}\label{eq:IR}
                \E\big[u_i(A; \theta, V)-p_i(\theta, V)\given V_i=v_i]\geq
					\max_{a_i\in\A_i}\E\big[u_i(a_i, A_{-i}; \theta,V)\given V_i=v_i]\, ,
    \end{equation}
where $A_{-i}$ is distributed according to $\sigma_i^o(\theta, V_{-i})$.
\end{definition}

Intuitively,  it is always in the seller's interest to relax this constraint as much as possible by selecting the outside communication rule $\sigma_i^o$ that minimizes the right hand side in \eqref{eq:IR}. In other words, the seller “punishes” a non-participating buyer by sending optimal recommendations to the remaining buyers so as to maximize the externalities induced on the deviating buyer. The specific way to achieve this depends on the downstream game and will be made explicit in \Cref{sec:welfare}.

%\tbo{do we need a remark to try to make it even more explicit that this is
% 	different from the “nuclear weapons” paper: it is not just about the
% 	negative externalities but also about having to take an action in the
% 	downstream game. In particular, an uninformed buyer can always guarantee
% themeleves at least $p_{\rm max}$ in our game. Or maybe this point is easier to
% make downstairs?}

\subsection{Characterizations}
\label{sec:ic-char}

In \Cref{sec:lin}, we shall focus on the binary game with symmetric additive payoffs described in \Cref{sec:model} and solve for the welfare- and revenue optimal mechanisms subject to the incentive compatibility and participation constraints defined in the previous section. To this end, this section provides characterizations of these two constraints  under different combinations of assumptions. These assumptions admit the aforementioned game as a special case but hold  more broadly.

\paragraph{Incentive Compatibility} We begin the analysis with a characterization of incentive compatibility
(Definition~\ref{def:ic}). We first restrict the buyers' utility to be multiplicatively separable in their independent private types and the outcome of the downstream game. In other words, the utility of buyer $i$  depends linearly on their  own type $v_i$, which is a buyer's marginal valuation for their \emph{downstream payoffs} $\pi_i$.

% For the rest of the paper, we will focus on a specialized version of the model in Section~\ref{Model} that covers settings such as those of Examples~\ref{ex:binary}, \ref{ex:Hot}, and \ref{ex:pred}. In particular, we make the following assumption.

\begin{assumption}[Linear Payoffs]\label{ass:lin}
The random variables $(\theta, V_1,\dots, V_n)$ are mutually independent. For
each $i\in[n]$, $V_i$ is supported on the non-negative reals and the utility of
buyer $i$ is given by 
\begin{displaymath}
	u_i(a;\theta, v) = v_i\cdot\pi_i(a; \theta)
\end{displaymath}
for some downstream payoff function $\pi_i:\A\times\Theta\to\R$. 
\end{assumption}
		
As discussed above, incentive
compatibility rules out double deviations and implies both truthfulness and obedience. Proposition \ref{prop:IC characterization} shows that, under \Cref{ass:lin}, the converse is true and incentive compatibility reduces to requiring truthfulness and obedience separately. In other words, double deviations are not profitable  whenever a mechanism is immune to single deviations.

\begin{proposition}[IC Characterization]\label{prop:IC characterization}
	Under Assumption~\ref{ass:lin}, a mechanism is incentive compatible whenever it is truthful and obedient.
\end{proposition}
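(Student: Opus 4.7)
The plan is to introduce an intermediate ``hybrid'' strategy---report $v_i'$ but then follow the seller's recommendation in the downstream game---and to sandwich the expected utility of the double deviation between this hybrid strategy and the truthful-obedient strategy. Linearity of payoffs (\Cref{ass:lin}) ensures that the buyer's true type $v_i$ enters her expected utility only as a multiplicative scalar on $\E[\pi_i]$, while mutual independence of the types ensures that, conditional on $V_i = v_i$ and $B_i = v_i'$, the joint distribution of $(\theta, V_{-i})$ is unchanged---so the recommendation arising from a report of $v_i'$ is distributed as $\tilde A \sim \sigma(\theta, v_i', V_{-i})$ regardless of the true $v_i$, and the payment $p_i(\theta, v_i', V_{-i})$ does not depend on $v_i$ in expectation.

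First, I would rewrite both sides of the IC inequality in Definition~\ref{def:ic} in the common form $v_i\,\E[\pi_i(\cdot\,;\theta)] - \E[p_i(\theta, \cdot, V_{-i})]$, with $A \sim \sigma(\theta, v_i, V_{-i})$ on the LHS and $\tilde A \sim \sigma(\theta, v_i', V_{-i})$ on the RHS. Next, I would apply truthfulness (Definition~\ref{def:truthful}) with the identity action deviation to obtain
\[
v_i\,\E[\pi_i(A;\theta)] - \E[p_i(\theta, v_i, V_{-i})] \;\geq\; v_i\,\E[\pi_i(\tilde A;\theta)] - \E[p_i(\theta, v_i', V_{-i})].
\]
Finally, I would apply obedience (Definition~\ref{def:obedience}) at the reported type $v_i'$, which gives $v_i'\,\E[\pi_i(\tilde A;\theta)] \geq v_i'\,\E[\pi_i(\delta(\tilde A_i), \tilde A_{-i};\theta)]$. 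Assuming $v_i', v_i > 0$, dividing through by $v_i'$ and remultiplying by $v_i$ produces the analogous inequality scaled by $v_i$, and chaining it with the truthfulness inequality above delivers the desired IC constraint after subtracting $\E[p_i(\theta, v_i', V_{-i})]$ from both sides.

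The delicate point I would flag as the main obstacle is precisely this sign-handling: the obedience constraint that eliminates the action deviation $\delta$ is indexed by the reported type $v_i'$, yet we must multiply the resulting difference of $\pi_i$ values by the true type $v_i$. This passes through cleanly whenever $v_i, v_i' > 0$, matching the natural interpretation of $V_i$ as a profit margin, but when $v_i' = 0$ the obedience constraint is vacuous and a separate argument is needed---either by restricting the support of $\V_i$ to be bounded away from zero, or by closing the argument at $v_i'=0$ via continuity from strictly positive reports. Beyond this edge case, the remainder is routine bookkeeping: verifying via independence of $V_i$ from $(\theta, V_{-i})$ that conditioning on $V_i = v_i$ can be dropped from expectations of quantities depending only on $(\theta, V_{-i}, \tilde A)$ once $B_i = v_i'$ is fixed, and that the same factorization holds symmetrically on the truthful side.
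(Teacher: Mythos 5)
Your proof is correct and follows essentially the same route as the paper's: decompose the double deviation through the hybrid strategy (misreport $v_i'$, then obey), eliminate the report deviation by truthfulness, and eliminate the action deviation by obedience applied at the reported type, using linearity and independence of types to pass between the conditioning events and between the true and reported types. The sign issue you flag is genuine---the paper's own proof silently rescales the obedience inequality from $v_i'$ to $v_i$, which is valid only for (strictly) positive types, consistent with their interpretation as profit margins---though your proposed continuity fix at $v_i'=0$ would not work in general, since $\sigma(\theta,\cdot,V_{-i})$ need not be continuous in the report and obedience at type $0$ is genuinely vacuous, so restricting $\V_i$ to positive types is the right repair.
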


\begin{proof}See \Cref{sec:app-char}.\end{proof}

%\tbo{I removed what was previously assumption 2, stating that the externalities are negative, because we actually never use this assumption anywhere in the paper. Please uncomment if you think it is important to keep}
% \iffalse
% With this formulation, the downstream game is interpreted as an imperfectly competitive market in which the outcome $(\pi_1,\dots,\pi_n)$ determines the quantity sold by each data buyer, and their types represent their marginal value of making a sale (i.e., their unit profit margins). Because the downstream game determines market shares, we maintain the following assumption on the nature of competition.

% \begin{assumption}[Competitive Market]\label{ass:zero-sum}
% Players in the downstream game impose negative externalities on each other, that is,
% \begin{align*}
%     \pi_i(a_i,a_{-i};\theta)\geq\pi_i(a_i',a_{-i};\theta)\Longrightarrow \forall  j \neq i \quad \pi_j(a_i,a_{-i};\theta)\leq\pi_j(a_i',a_{-i};\theta)\,,
% \end{align*}
% for each $a_i,a_i'\in\A_i$, $a_{-i}\in\A_{-i}$ and state $\theta\in\Theta$.
% \end{assumption}
% \fi

\paragraph{Truthfulness} In order to characterize truthful mechanisms, we follow the classical result of \cite{M81}, which we restate in Proposition \ref{prop:expected-payment} below using our notation. In particular, recall that, in the general formulation of our model, the payment function $p$ is allowed to depend on the realized  state and on every buyer's reported type. However, note that  we can restrict attention without loss of generality to interim-stage payments, i.e., where buyers' payments are a function of their own report only.

Thus, let $(\sigma, p)$ be a mechanism and define for buyer $i\in[n]$, the 	interim downstream payoff $\tpi_i(V_i)\eqdef \E[\pi_i(A; \theta)\given V_i]$ and interim payment $\tp_i(V_i)\eqdef \E[p_i(\theta,V)\given V_i]$. We then have the following familiar characterization result.% For completeness, a proof is provided in Appendix~\ref{sec:app-d}.

\begin{proposition}[Truthfulness Characterization]\label{prop:expected-payment}
	The mechanism $(\sigma, p)$ is truthful if and only if for each buyer $i$:
	\begin{enumerate}
		\item The interim downstream payoff $\tpi_i$ is non-decreasing.
		\item The interim payment $\tp_i$ is given for $v_i\in\V_i$ by
			\begin{equation}\label{eq:payment}
				\tp_i(v_i) = v_i\cdot\tpi_i(v_i) - \uv\cdot\tpi_i(\uv)
				+ \tp_i(\uv) - \int_{\uv}^{v_i} \tpi_i(s)ds\,.
			\end{equation}
	\end{enumerate}
\end{proposition}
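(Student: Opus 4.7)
The plan is to reduce the truthfulness condition to a single-parameter screening problem and then apply the classical Myerson envelope argument. First, I would use Assumption~\ref{ass:lin} to rewrite the conditional expectation on the left-hand side of \Cref{def:truthful}. Since $V_i$ is independent of $(\theta, V_{-i})$ and since $u_i(a;\theta,v)=v_i\cdot\pi_i(a;\theta)$ is linear in $v_i$, conditioning on $V_i=v_i$ and $B_i=v_i'$ the distribution of $(A,\theta)$ depends only on $v_i'$ (through the report entering $\sigma$ and $p$), so I can write the expected utility of a type-$v_i$ buyer who reports $v_i'$ as
\begin{displaymath}
    U_i(v_i, v_i') \eqdef v_i\cdot\tpi_i(v_i') - \tp_i(v_i').
\end{displaymath}
Truthfulness is then equivalent to $U_i(v_i, v_i) \geq U_i(v_i, v_i')$ for every $v_i, v_i'\in\V_i$, which is the standard single-dimensional IC condition.

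For the forward direction, I would first derive monotonicity by writing the IC inequalities at both $v_i$ (against reporting $v_i'$) and at $v_i'$ (against reporting $v_i$), adding them, and cancelling payments to obtain
\begin{displaymath}
    (v_i - v_i')\bigl(\tpi_i(v_i) - \tpi_i(v_i')\bigr) \geq 0,
\end{displaymath}
which shows $\tpi_i$ is non-decreasing. For the payment formula, I would define the indirect utility $U_i(v_i)\eqdef U_i(v_i, v_i) = v_i\tpi_i(v_i) - \tp_i(v_i)$. Because $U_i$ is the pointwise supremum over $v_i'$ of the affine functions $v_i\mapsto v_i\tpi_i(v_i') - \tp_i(v_i')$, it is convex in $v_i$, hence absolutely continuous on $\V_i$ and differentiable almost everywhere. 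An envelope argument (or the standard two-sided IC inequality) gives $U_i'(v_i) = \tpi_i(v_i)$ wherever the derivative exists. Integrating from $\uv$ to $v_i$,
\begin{displaymath}
    v_i\tpi_i(v_i) - \tp_i(v_i) = \uv\tpi_i(\uv) - \tp_i(\uv) + \int_{\uv}^{v_i} \tpi_i(s)\,ds,
\end{displaymath}
which rearranges to \eqref{eq:payment}.

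For the converse, I would assume $\tpi_i$ is non-decreasing and $\tp_i$ is given by \eqref{eq:payment}, and verify that $U_i(v_i, v_i) - U_i(v_i, v_i')\geq 0$ by direct substitution; this reduces to showing
\begin{displaymath}
    (v_i - v_i')\tpi_i(v_i') \leq \int_{v_i'}^{v_i} \tpi_i(s)\,ds,
\end{displaymath}
which holds (with appropriate sign depending on whether $v_i\gtrless v_i'$) by monotonicity of $\tpi_i$.

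The main subtlety I anticipate lies in justifying the reduction to interim quantities: the payment $p_i(\theta, V)$ in the general mechanism depends on the realized state and on every buyer's report, and it is the independence of $V_i$ from $(\theta, V_{-i})$ in Assumption~\ref{ass:lin}, combined with the linearity in $v_i$, that lets the cross-terms collapse so that only the interim objects $\tpi_i$ and $\tp_i$ appear in the IC inequality. Once this reduction is cleanly stated, the remainder is a textbook envelope argument; the only other minor care-point is invoking convexity/absolute continuity to justify the integral representation without assuming differentiability of $\tpi_i$.
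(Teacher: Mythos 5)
Your proposal is correct and follows essentially the same route as the paper's proof: both reduce truthfulness to the single-parameter condition $v_i\tpi_i(v_i)-\tp_i(v_i)\geq v_i\tpi_i(v_i')-\tp_i(v_i')$ and then apply the standard Myerson/envelope characterization. The only difference is presentational: the paper compresses the monotonicity, the integral representation, and the converse into a single appeal to the subdifferential characterization of convex functions, whereas you unpack these into the two-inequality argument, the envelope integration, and a direct verification of sufficiency.
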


\begin{proof}See \Cref{sec:app-char}.\end{proof}
%\subsection{Characterization of Obedient Mechanisms}

\paragraph{Obedience} Next, we turn to a characterization of obedience for a class of games which includes the binary game with additive payoffs \eqref{eq:pi} but contains more generally all games for which the externalities imposed by buyers $-i$ on buyer $i$ are independent of this buyer's  strategy. In particular, the externalities are not restricted to being additively decomposable.

\begin{assumption}[Binary State, Separable Externalities]\label{ass:ext}
For each buyer $i\in[n]$, the strategy space satisfies $\A_i=\Theta=\set{0,1}$, and the downstream payoff function $\pi_i$ is given by
	\begin{displaymath}
		\pi_i(a;\theta) = \ind\set{a_i=\theta} + E_i(a_{-i};\theta)
	\end{displaymath}
	for some externality function $E_i:\A_{-i}\times\Theta\to\R$.
\end{assumption}

In a game satisfying \Cref{ass:ext}, the dominant strategy for each buyer in the absence of any signal about $\theta$ is to play the action corresponding to the most likely state under the prior. By construction, this  is the  correct action with probability
\begin{equation}\label{eq:pmax}
    \P[A_i=\theta\given V_i] = \max_{k\in\Theta}\P[\theta=k]=:\pmax.
\end{equation}
The characterization of obedience in \Cref{lemm:char} below requires that following the recommended action makes a buyer more likely to be correct than if choosing an action under the common prior.

	\begin{proposition}[Obedience Characterization]\label{lemm:char}
		Under Assumptions~\ref{ass:lin} and \ref{ass:ext}, a recommendation rule is obedient if and only if for each $i\in[n]$, it holds almost surely that
	\begin{displaymath}
		\P[A_i= \theta\given V_i]\geq \pmax.%\max_{k\in\Theta}\P[\theta=k].
	\end{displaymath}
\end{proposition}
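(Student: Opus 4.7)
The plan is to start from the pointwise (second) formulation of obedience in Definition~\ref{def:obedience}, substitute in the specific payoff structure \eqref{eq:pi}, and argue that the obedience condition cleanly decomposes because the externality term $-\alpha\ind\{A_j=\theta\}$ does not depend on player $i$'s \emph{own} action. More precisely, conditional on $(V_i=v_i, A_i=a_i)$, the joint distribution of $(\theta,A_j)$ is the same regardless of which action $a_i'$ we plug into $u_i$, so the $-\alpha$ term cancels on both sides of the obedience inequality. Under Assumption~\ref{ass:lin}, $u_i = v_i \cdot \pi_i$ and we may assume $v_i\geq 0$ (values of $0$ are handled trivially). Thus obedience reduces to: for a.s.\ $V_i$ and every $a_i,a_i'\in\A_i$ with $\P[A_i=a_i\mid V_i]>0$,
\[
	\P[\theta=a_i\mid V_i, A_i=a_i]\geq\P[\theta=a_i'\mid V_i,A_i=a_i].
\]

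Next I would specialize to the binary setting and rewrite the above using joint probabilities. Denoting $q_a \eqdef \P[A_i=a,\theta=a\mid V_i]$ and setting $a_i'=1-a_i$, the condition at $a_i=0$ becomes $q_0 \geq \P[\theta=1, A_i=0\mid V_i]$ and symmetrically at $a_i=1$. By the independence of $V_i$ and $\theta$ (Assumption~\ref{ass:lin}), the marginal of $\theta$ conditional on $V_i$ is its unconditional distribution, so $\P[\theta=1,A_i=0\mid V_i] = \P[\theta=1] - q_1$ and $\P[\theta=0,A_i=1\mid V_i] = \P[\theta=0] - q_0$. Substituting yields the pair of inequalities
\[
	q_0+q_1 \geq \P[\theta=1] \quad\text{and}\quad q_0+q_1\geq\P[\theta=0],
\]
which together read $q_0+q_1 \geq \max\{\P[\theta=0],\P[\theta=1]\}$. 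Since $q_0+q_1 = \P[A_i=\theta\mid V_i]$, this is exactly the inequality in the statement.

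For the converse, assume $\P[A_i=\theta\mid V_i]\geq \max\{\P[\theta=0],\P[\theta=1]\}$. Running the computation in reverse, this gives both $q_0\geq \P[\theta=1]-q_1 = \P[\theta=1,A_i=0\mid V_i]$ and $q_1\geq \P[\theta=0,A_i=1\mid V_i]$, which are precisely the two pointwise obedience inequalities; dividing by $\P[A_i=a_i\mid V_i]$ (when positive) recovers the posterior form, and combined with the cancellation of the externality term this yields obedience in the sense of Definition~\ref{def:obedience}. The only subtle step is the cancellation of the externality at the outset; everything afterwards is straightforward probability algebra that crucially uses the independence of $\theta$ and $V_i$ from Assumption~\ref{ass:lin}.
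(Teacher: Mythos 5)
Your proposal is correct and follows essentially the same route as the paper's proof: both cancel the externality term by observing that the payoff gain from switching actions does not depend on $A_j$, then use the independence of $\theta$ and $V_i$ to turn the two pointwise obedience inequalities (one per recommended action) into $\P[A_i=\theta\given V_i]\geq\P[\theta=1-a_i]$ and combine them into the stated max. Your bookkeeping via the joint probabilities $q_a$ is just a rephrasing of the paper's Bayes'-rule step, so there is no substantive difference.
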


\begin{proof}See \Cref{sec:app-char}.\end{proof}

In our characterization of optimal mechanisms below, we exploit the strength of this result, i.e., that obedience is a property of the marginal distribution of actions recommended to buyer $i$. In other words,  the designer can flexibly correlate the buyers' actions and state, provided each buyer is recommended the right action often enough \emph{on average}.

\section{Optimal Mechanisms}\label{sec:lin}

We now turn social welfare and revenue maximization. We show below that, for the binary downstream game with additive payoffs in Eq.~\eqref{eq:pi}, both objectives can be written as a weighted sum of the probabilities that the mechanism recommends the dominant strategy to each buyer (see Eq.~\eqref{eq:meta} below). Hence, we first describe in \Cref{sec:meta} an optimal mechanism for a general class of objective functions of this form, which we then instantiate in \Cref{sec:welfare} and \Cref{sec:revenue} to derive mechanisms  maximizing social welfare and revenue, respectively.

\subsection{Optimal Mechanisms}\label{sec:meta}

We consider a general objective function of the form
\begin{equation}\label{eq:meta}
	W\eqdef\E\left[\sum_{i=1}^n w_i(V)\ind\set{A_i=\theta}\right]
	=\sum_{i=1}^n \E\Big[w_i(V)\P[A_i=\theta\given V]\Big]
\end{equation}
for weight functions $w_i:\V\to\R$.

%\ale{SEE IF THIS FLOW WORKS.}\tbo{looks good to me}%I realized that we no longer explicitly make use of Lemma 4.1 below. Although the proof that determines the optimal mechanism does internally make use of the parametrization in terms of the $h_i$ functions, we no longer explicitly refer to Lemma 4.1, so maybe we can get rid of this discussion on parametrization entirely? I do believe it has some conceptual value but it is a bit heavy at the moment.}

Expression \ref{eq:meta} and the characterization of obedience obtained in \Cref{lemm:char} suggest a convenient parametrization of the seller's problem in terms of the functions $h_i:\V\to[0,1]$ given by $h_i(V)\eqdef \P[A_i=\theta\given V]$ for each player $i\in[n]$. These functions can easily be expressed in terms of the recommendation rule $\sigma$.  Indeed, we have almost surely
\begin{align*}
	\P[A_i=\theta\given V] &= \E[\ind\set{A_i=\theta}\given V]\\
						   &=\sum_{\substack{a\in\A\\a_i=\theta}}\E[\ind\set{A_1=a_1,\dots,A_n=a_n}\given V]
						   =\sum_{\substack{a\in\A\\a_i=\theta}}\E[\sigma(a;\theta, V)\given V].
\end{align*}
Conversely, Lemma \ref{lemm:param} in \Cref{sec:app-meta} shows that it is possible  to construct a recommendation rule which has $h_i$ as its marginals. In other words, any choice of the marginal functions $h_i$ can be ``realized'' by a recommendation rule. Hence, as long as designer's objective and the constraints on the recommendation rule can be expressed in terms of $h_i(V)$, we will directly optimize over these quantities. An optimal information structure $\sigma$ in this class can then be obtained using \Cref{lemm:param}. 

We now  describe a general recommendation rule that optimizes criteria of the form \eqref{eq:meta}, which include social welfare and seller revenue, subject to the obedience constraints. Recall the definition of $\pmax$ given in \eqref{eq:pmax}.% Due to space limitation constraint, we only provide a proof sketch here and defer the full proof to \Cref{sec:app-d}.

%We are now ready to describe our optimal mechanism.%\footnote{While the assumption of independence of the types seems necessary for the tractability of 	the problem, the assumption that the types are identically distributed 		is only for convenience and it is easy to adapt our results to the	non-identical case.} 

\begin{proposition}[Optimal Mechanism]\label{prop:master}
Under Assumptions~\ref{ass:lin} and \ref{ass:ext}, consider an objective $W$ of the form \eqref{eq:meta} where for $i\in[n]$, $w_i:\V\to\R$ is a measurable function such that the random variable $w_i(v_i, V_{-i})$ is non-atomic for each $v_i\in\V_i$. For $i\in[n]$ let $t_i^\star:\V_i\to\R$ be such that for all $v_i\in\V_i$,
\begin{displaymath}
	\P\big[w_i(v_i, V_{-i})\geq t_i^*(v_i)\big]=\pmax.%\max_{k\in\set{0,1}}\P[\theta=k].
\end{displaymath}
Then the deterministic recommendation rule given by
\begin{displaymath}
A_i=\theta\quad\text{if and only if}\quad
w_i(v)\geq \min\set{0,t_i^\star(v_i)}
\end{displaymath}
for $i\in[n]$, maximizes $W$ subject to obedience.
\end{proposition}
\begin{proof}See \Cref{sec:app-meta}.\end{proof}

%\begin{proof}[Proof sketch]
In order to gain intuition into the characterization of optimal mechanisms, note that the objective function $W$ in \eqref{eq:meta} and the obedience constraints are separable. In other words,  the optimization problem reduces to solving separately for each $i\in[n]$ and $v_i\in\V_i$:
\begin{align*}
	\max&\;\E\big[w_i(v_i, V_{-i})h_i(v_i, V_{-i})\big]\\
	\text{s.t.}&\;\E[h_i(v_i, V_{-i})]\geq \pmax,%\max_{k\in\set{0,1}}\P[\theta=k],
\end{align*}
where, as above, $h_i(v)=\P[A_i=\theta\given V]$ is the “allocation of correct information” to buyer $i$ and takes values in $[0,1]$ by definition. In the absence of the obedience constraint, the optimal solution would simply be to choose $h_i(v) = \ind\set{w_i(v)\geq 0}$. If this violates the obedience constraint, we must also  allocate information to some types where $w_i(v)<0$, but we want to do so where the weight function $w_i$ is as large as possible. Hence we should consider the smallest possible superlevel set of $w_i$ that guarantees that the constraint is satisfied. This set corresponds to the level $t_i^\star(v_i)$ defined in the proposition statement.
%\end{proof}

\iffalse
\begin{proposition}[Optimal Mechanism]\label{prop:master}
	Assume that $V_1$ and $V_2$ are identically and independently distributed with absolutely continuous cumulative distribution function $F$ and write $v^\star\eqdef
	F^{-1}\big(\max\set{\P[\theta=0],\P[\theta=1]}\big)$. Consider the objective
	\begin{displaymath}
		W\eqdef\E\big[w_1(V)\ind\set{A_1=\theta} + w_2(V)\ind\set{A_2=\theta}\big]
	\end{displaymath}
	where $w_1, w_2:\V\to\R$ are functions such that for all $(v_1,v_2)\in\V$,
	$w_1(v_1,\cdot)$ and $w_2(\cdot, v_2)$ are non-increasing functions. Define
	$v_1^\star, v_2^\star:\V\to\R$ by
	\begin{displaymath}
		v_2^\star(v_1)=\sup\set{v_2\in\R\given w_1(v_1, v_2)\geq 0}
	\quad\text{and}\quad
	v_1^\star(v_2)= \sup\set{v_1\in\R\given w_2(v_1, v_2)\geq 0}
	\end{displaymath}
	with the usual convention $\sup\emptyset=-\infty$.  Then the recommendation rule
	characterized by
	\begin{displaymath}
		\P[A_i=\theta\given V] = \ind\set[\big]{V_j\leq\max\set{v_j^\star(V_i), v^\star}}
	\end{displaymath}
	for all $\set{i,j}=\set{1,2}$, maximizes $W$ subject to obedience.
\end{proposition}
\fi

\subsection{Welfare Maximization}\label{sec:welfare}

We now leverage \Cref{prop:master} to characterize the welfare-optimal mechanism in our environment. For the binary game with additive payoffs \eqref{eq:pi}, we can write the expected social welfare as
\begin{equation}\label{eq:welfare}
\begin{split}
	W &= \sum_{i=1}^n\E\bigg[V_i\Big(\ind\set{A_i=\theta} - \frac\alpha{n-1}\sum_{j\neq i}\ind\set{A_j=\theta}\Big)\bigg]\\
	  &=\sum_{i=1}^n\E\bigg[\Big(V_i-\frac\alpha{n-1}\sum_{j\neq i} V_j\Big)\ind\set{A_i=\theta}\bigg].
\end{split}
\end{equation}
Using the characterization of obedience from \Cref{lemm:char}, the problem of
maximizing social welfare subject to obedience can be written
\begin{displaymath}
	\begin{split}
		\max&\;\sum_{i=1}^n\E\bigg[\Big(V_i-\frac\alpha{n-1}\sum_{j\neq i} V_j\Big)\ind\set{A_i=\theta}\bigg]\\
	\text{s.t.}&\;\P[A_i=\theta\given V_i]\geq\pmax,
	%\max_{k\in\set{0,1}}\P[\theta=k],\; 
 \text{ for $i\in[n]$ and a.s.}
	\end{split}
\end{displaymath}
which is of the form \eqref{eq:meta}. We can thus apply \Cref{prop:master} and obtain the following characterization of the welfare-maximizing (second best) mechanism. %Let $v^\star$ be such that $F^{(n-1)}(v^\star) =\pmax$.% \max_{k\in\set{0,1}} \P[\theta=k]$. %\tbo{Some issues here: $F$, $F^{(n-1)}$ haven't been defined yet and the notation only makes sense for i.i.d. types, which we haven't assumed yet. I (weakly) prefer keeping the definition of $v^\star$ inside Prop 4.2 to keep it self-contained.}

\begin{proposition}[Welfare Optimal Mechanism]\label{prop:second-best}
Consider the binary game with additive payoffs \eqref{eq:pi} under
\Cref{ass:lin}. Further assume that the buyers' types are identically
distributed with absolutely continuous c.d.f.\ $F$ and denote by $F^{(k)}$ the
c.d.f.\  of the sum of $k$ i.i.d.\ variables\footnote{$F^{(k)}$ can be computed
recursively with $F^{(1)} = F$ and $F^{(k+1)} = F^{(k)}\ast f$, where $\ast$
denotes the convolution product and $f$ is the p.d.f.\ associated with $F$.}
distributed according to $F$. 

Define $v^\star\in\R$ such that $$F^{(n-1)}(v^\star)\eqdef \P\bigg[\sum_{j\neq
i} V_j\leq v^\star\bigg] =\pmax.$$%$$	\max_{k\in\set{0,1}}\P[\theta=k].$$
and $\overline\alpha\eqdef\frac\alpha{n-1}$. Then, the recommendation rule maximizing social welfare
subject to obedience is the deterministic rule given by
\begin{displaymath}\label{eq:welfare-alloc}
A_i=\theta\quad\text{if and only if}\quad
\sum_{j\neq i} v_j\leq\max\set{v^\star, v_i/\overline\alpha}.
\end{displaymath}
\end{proposition}

\begin{proof}See \Cref{sec:app-welfare}.\end{proof}

\begin{figure}[!t]
\begin{tikzpicture}[baseline=(current bounding box.west),scale=1]
\draw [<->]  (0,6) node [left] {$v_2$}  -- (0,0) -- (6,0) node [below] {$v_1$};
\draw [info] (0,0) -- (4, 6) node[sloped, pos=0.1, above right, font=\small] {$v_2=v_1/\alpha$};
\draw [info] (0,0) -- (6, 4) node[sloped, pos=0.1, below right, font=\small] {$v_2=\alpha v_1$};
\draw[info] (2.5, 0) -- (2.5, 6);
\draw[info] (0, 2.5) -- (6, 2.5);
\draw [thick] (0,2.5) -- (1.6667, 2.5) -- (4, 6);
\draw [thick] (2.5,0) -- (2.5, 1.6667) -- (6, 4);
\node [left] at (0,2.5) {$v^\star$};
\node [below] at (2.5,0) {$v^\star$};
\node at (1.25,4.5) {$\set{2}$};
\node at (4.5,1.25) {$\set{1}$};
\node at (3.2,3) {$\set{1,2}$};
\node[above] at (current bounding box.north) {$\alpha=2/3$};
\end{tikzpicture}
\hfill
\begin{tikzpicture}[baseline=(current bounding box.west),scale=1]
\draw [<->]  (0,6) node [left] {$v_2$}  -- (0,0) -- (6,0) node [below] {$v_1$};
\draw[info] (0,0) -- (4, 6) node[sloped, pos=0.1, above right, font=\small] {$v_2=\alpha v_1$};
\draw[info] (0,0) -- (6, 4) node[sloped, pos=0.1, below right, font=\small] {$v_2=v_1/\alpha$};
\draw[info] (2.5, 0) -- (2.5, 6);
\draw[info] (0, 2.5) -- (6, 2.5);
\draw [thick] (0,2.5) -- (3.75, 2.5) -- (6, 4); 
\draw [thick] (2.5,0) -- (2.5, 3.75) -- (4, 6);
\node [left] at (0,2.5) {$v^\star$};
\node [below] at (2.5,0) {$v^\star$};
\node at (1.5,4) {$\set{2}$};
\node at (4,1.5) {$\set{1}$};
\node at (4,4) {$\varnothing$};
\node at (1.6,1.6) {$\set{1,2}$};
\node[above] at (current bounding box.north) {$\alpha=3/2$};
\end{tikzpicture}
\caption{Welfare-maximizing recommendation rule from \Cref{prop:second-best} with two buyers, for $\alpha=2/3$ (left) and $\alpha=3/2$ (right). The label in each region indicates the set of buyers who are recommended the correct action ($A_i=\theta$)—buyers in the complement set are recommended the wrong action ($A_i=1-\theta$). The two states are equally likely ex ante, so $v^\star=F^{-1}(1/2)$ is the median of the type distribution—chosen to be a standard exponential here).}
\label{fig:welfare-mech}
\end{figure}
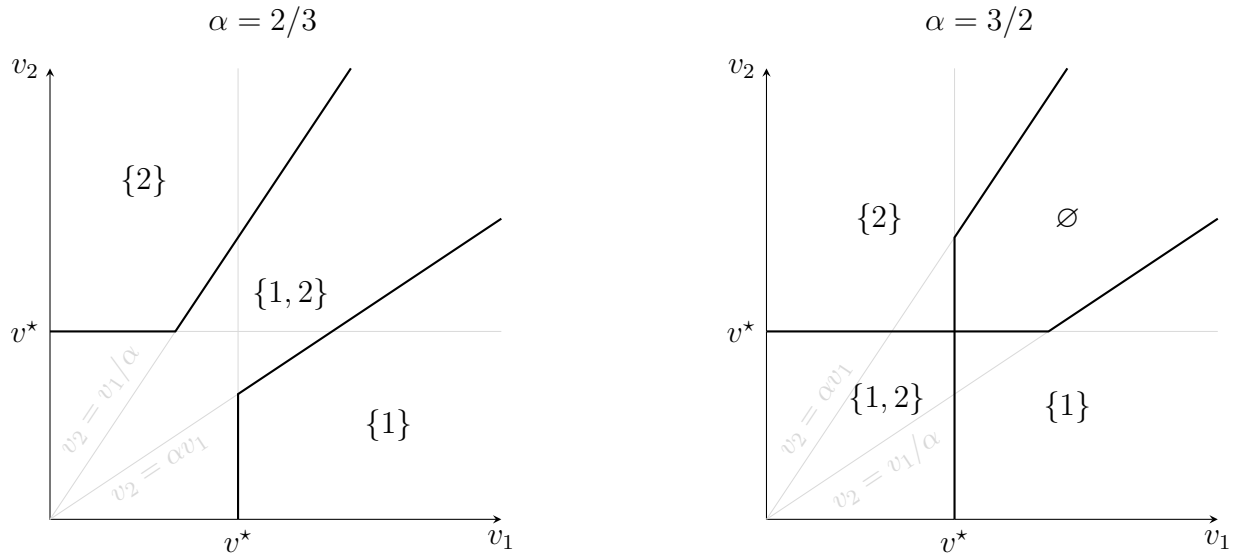

\Cref{fig:welfare-mech} gives a representation of the welfare-optimal recommendation rule from Proposition~\ref{eq:welfare-alloc} in the two-buyer case. This recommendation can be conceptualized as the “superposition” of two recommendation rules, which we describe separately in \Cref{fig:min-externality}.
%\tbo{I've updated part of the discussion below to $n$ buyers because I felt it could be done without increasing the reader's “mental load”. The previous version is commented out and can easily be reverted to if the alternating between $n$ and $2$ buyers is confusing.}
%\ale{I think using expressions for $n$ while describing a picture for $n=2$ is a little confusing. could not find the older material though}
\begin{enumerate}
	\item The first rule (\Cref{fig:min-externality}, left) recommends the correct action to buyer $i$ if and only if buyer $j$'s type satisfies $v_j\leq v^\star$. For this rule, the recommendation to buyer $i$ is independent or their type and satisfies $\P[A_i=\theta\given V_i] = F(v^\star) = \pmax$. In other words, the recommendation is correct as often as buyer $i$ would be by deterministically playing the action matching the most likely state under the prior. This implies by the characterization of \Cref{lemm:char} that the mechanism is obedient. Consequently, this mechanism recommends the correct action to buyer $i$ \emph{just often enough} to ensure obedience, and does so when buyer $j$'s type is lowest, thus minimizing the induced externality $\alpha v_j\ind\set{A_i=\theta}$. In summary, this mechanism ensures each buyer's  obedience while minimizing the externality induced on the other buyer. Note that the mechanism is obedient despite the action recommendations being deterministic in each region. This is because from the perspective of each buyer, conditional on their type, the recommendation they receive is still a random variable depending on the (unobserved) realization of the other buyer's type.
	\item The second rule (\Cref{fig:min-externality}, center and right) recommends the correct action to buyer $i$ if and only if her type satisfies $v_i\geq \alpha v_j$. This is simply the welfare-maximizing allocation (in the absence of the obedience constraint): a buyer is recommended the correct action if her value exceeds the externality she imposes on the other buyer. In particular, when buyer $i$'s type is large enough compared to buyer $j$'s type ($v_i/v_j\geq \max\set{\alpha,1/\alpha}$), she is recommended the right action exclusively, hence maximizing her utility. In the intermediate region where types are close to each other, both buyers are recommended the same action. When $\alpha\leq1$, the region is defined by $\alpha v_j\leq v_i\leq v_j/\alpha$ and the efficient allocation recommends the correct action to both buyers. In contrast, when $\alpha> 1$, the region is defined by $v_j/\alpha\leq v_i\leq \alpha v_j$ and both buyers are recommended the wrong action. Indeed, the externalities are so significant in this case that the buyers face a prisoners' dilemma in each state. It is thus more efficient for the data seller to coordinate the buyers on the collaborative strategy in which both buyers pick the “wrong” action.
\end{enumerate}

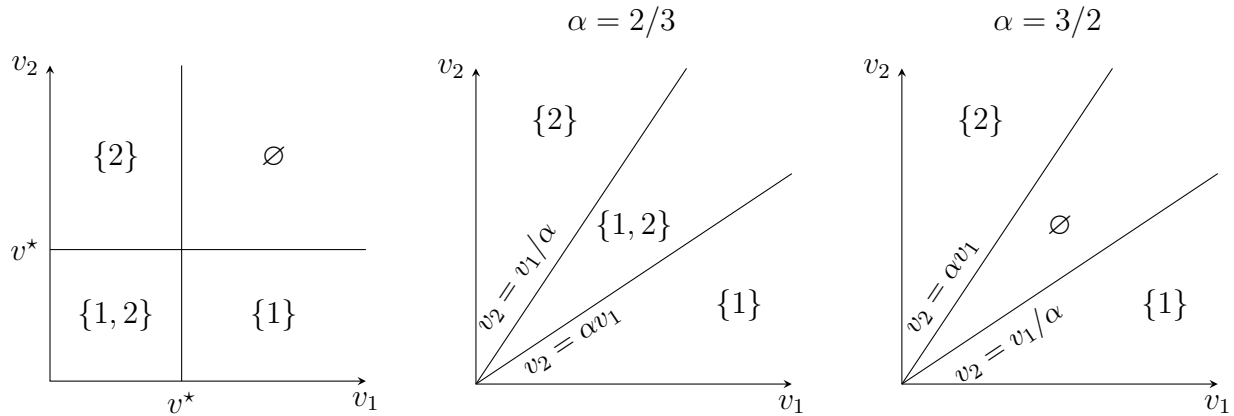
\begin{figure}[!t]
	\centering
	\begin{tikzpicture}[baseline=(current bounding box.south),scale=0.7]
\draw [<->]  (0,6) node [left] {$v_2$}  -- (0,0) -- (6,0) node [below] {$v_1$};
\draw (2.5, 0) node[below] {$v^\star$} -- (2.5, 6);
\draw (0, 2.5) node [left] {$v^\star$} -- (6, 2.5);
\node at (1.25,4.25) {$\set{2}$};
\node at (4.25,1.25) {$\set{1}$};
\node at (1.25,1.25) {$\set{1,2}$};
\node at (4.25,4.25) {$\varnothing$};
\end{tikzpicture}\hfill
	\begin{tikzpicture}[baseline=(current bounding box.south),scale=0.7]
\draw [<->]  (0,6) node [left] {$v_2$}  -- (0,0) -- (6,0) node [below] {$v_1$};
\draw (0,0) -- (4, 6) node[sloped, pos=0.1, above right, font=\small] {$v_2=v_1/\alpha$};
\draw (0,0) -- (6, 4) node[sloped, pos=0.1, below right, font=\small] {$v_2=\alpha v_1$};
\node at (1.5,5) {$\set{2}$};
\node at (5,1.5) {$\set{1}$};
\node at (3,3) {$\set{1,2}$};
\node[above] at (current bounding box.north) {$\alpha=2/3$};
\end{tikzpicture}\hfill
\begin{tikzpicture}[baseline=(current bounding box.south),scale=0.7]
\draw [<->]  (0,6) node [left] {$v_2$}  -- (0,0) -- (6,0) node [below] {$v_1$};
\draw (0,0) -- (4, 6) node[sloped, pos=0.1, above right, font=\small] {$v_2=\alpha v_1$};
\draw (0,0) -- (6, 4) node[sloped, pos=0.1, below right, font=\small] {$v_2=v_1/\alpha$};
\node at (1.5,5) {$\set{2}$};
\node at (5,1.5) {$\set{1}$};
\node at (3,3) {$\varnothing$};
\node[above] at (current bounding box.north) {$\alpha=3/2$};
\end{tikzpicture}
\caption{Building blocks for the welfare-maximizing mechanism of \Cref{prop:second-best}. Left: mechanism guaranteeing obedience at all types while minimizing externalities. Center and right: first-best mechanism (ignoring the obedience constraint) for $\alpha=2/3$ and $\alpha=3/2$.}
\label{fig:min-externality}
\end{figure}

The optimal mechanism (\Cref{fig:welfare-mech}) combines both mechanisms by
distorting the first best mechanism to guarantee that each buyer $i$ receives
the correct action when $v_j\leq v^\star$. Distorting buyer
$i$'s recommendation is required, and hence obedience is binding, when
$v_j\leq v^\star$ and $v_i\leq \alpha v_j$.
%which occurs as long as $V_i\leq \alpha v^\star$. The total mass of types for which distortion is required is thus $2F(\alpha v^\star)$ which is non-decreasing in $\alpha$: as the competition toughens, it becomes harder for the seller to “buy” the buyers' obedience, resulting in a greater efficiency loss.

Finally, it is easy to verify that the second best mechanism is implementable, i.e., it satisfies the buyers' truth-telling constraints. Indeed, by \Cref{prop:expected-payment} it suffices to verify that the interim downstream payoff is non-decreasing in the buyer's type.

\begin{proposition}[Implementability of Second-Best Mechanism]\label{prop:implementability}
%	\tbo{check statement and proof}
	For the deterministic mechanism of \Cref{prop:second-best}, and under the same assumptions, the interim expected payoff of buyer $i\in[n]$,
	$\tpi_i(V_i)\eqdef\E\big[\pi_i(A;\theta)\given V_i\big]$, satisfies almost
	surely
	\begin{displaymath}
		\tpi_i(v_i) = \max\set[\big]{F^{(n-1)}(v^\star),F^{(n-1)}(v_i/\overline\alpha)}
		-\overline\alpha\sum_{j\neq i}
		\E\Big[F^{(n-2)}\big(\max\set{v^\star,V_j/\overline\alpha}-v_i\big)\Big].
	\end{displaymath}
	In particular, $\tpi_i$ is non-decreasing and the recommendation rule is
	therefore implementable.
\end{proposition}
\begin{proof}See \Cref{sec:app-welfare}.\end{proof}

% \begin{remark}\tbo{check: Not sure it's worth going through this}
% 	In the specific case of two buyers ($n=2$), the expression for $\tpi_i$ in
% 	\Cref{prop:implementability} simplifies. Indeed, we then
% 	have $F^{(n-1)} = F$ and $F^{(n-2)}:x\mapsto\ind\set{x\geq 0}$ (the
% 	distribution of the sum of $0$ variables distributed according to $F$ is a
% 	Dirac distribution at $0$). Hence,
% 	\begin{align*}
% 	\E\Big[F^{(n-2)}\big(\max\set{v^\star,V_j/\alpha}-v_i\big)\Big]
% 	&=\P\big[v_i\leq \max\set{v^\star, V_j/\alpha}\big]
% 	=1-\P\big[v_i> \max\set{v^\star, V_j/\alpha}\big]\\
% 	&=1-\P\big[v_i> v^\star\land v_i>V_j/\alpha\big]
% 	=1-\ind\set{v_i>v^\star}F(\alpha v_i).
% 	\end{align*}
% 	Consequently, the interim payoff of buyer $i$ is given by
% 	\begin{displaymath}
% 		\tpi_i(v_i) = \max\set[\big]{F(v^\star),F(v_i/\alpha)}
% 		+\alpha \ind\set{v_i>v^\star}F(\alpha v_i) - \alpha.
% 	\end{displaymath}
% \end{remark}

Intuitively, a higher type is revealed the correct state more often by the social planner, which makes it possible to find transfers that would induce truthful reporting of the buyers' types. Of course these transfers do not correspond to a monopolist data seller's optimal choice. In the next section, we will see how a monopolist data seller modifies the second best mechanism to maximize the associated payments.
 
\subsection{Revenue Maximization}\label{sec:revenue}

Throughout this section, we further assume that the type distribution $F$ is
absolutely continuous with p.d.f.\ $f$ and that the \emph{virtual value
function} $\phi:\V_i\to\R$ defined by
\begin{displaymath}
	\phi(v)\eqdef v - \frac{1-F(v)}{f(v)},
\end{displaymath}
is non-decreasing, that is, $F$ is \emph{regular} in the sense of \cite{M81}.

We first show in Lemma~\ref{lem:vv} that maximizing the seller's expected
revenue reduces to maximizing the virtual
surplus, as in \cite{M81}. %The proof is given in \Cref{sec:app-b}.

\begin{lemma}[Reduction to Virtual Surplus]\label{lem:vv}

Let $\sigma$ be a communication rule for which the interim payoff $\tpi_i$ is
non-decreasing for each buyer $i\in[n]$. Denote by $K$ the interim downstream
payoff of a non-participating buyer\footnote{Using the notations of
	\Cref{def:ir}, if $\sigma_i^o$ denotes the recommendation rule used with
	the remaining buyers when buyer $i$ does not participate, then we have
$K=\E[\pi_i(a^*, A_{-i};\theta)]$, where $A_{-i}$ is distributed according to
$\sigma_i^o(\theta,V_{-i})$ and $a^*$ is the action matching the most likely
state under the prior.} and assume that $\tpi_i(\uv)\geq K$. Then:
	\begin{enumerate}
		\item If $p_i$ is a payment function that truthfully implements
			$\tpi_i$ (i.e., that satisfies \eqref{eq:payment}
			by \Cref{prop:expected-payment}), then $(\sigma,p)$ is individually
			rational if and only if it is individually rational for the lowest
			type, that is, $\tp_i(\uv)\leq \uv\cdot (\tpi_i(\uv) - K)$.
		\item Among the payment functions $p_i$ implementing $\tpi_i$ in
			a truthful and individually rational manner, the revenue-maximizing
			one is given by
		\begin{equation}\label{eq:payments-bis}
			\tp_i(v_i) = v_i\cdot\tpi_i(v_i) - \uv\cdot K - \int_{\uv}^{v_i}\tpi_i(s)ds\,.
		\end{equation}
		For this payment function, the seller's revenue is $R=
		\sum_{i=1}^n\E\big[\phi(V_i)\tpi_i(V_i)\big]-n\uv\cdot K$.
	\end{enumerate}
\end{lemma}

\begin{proof}See \Cref{sec:app-revenue}.\end{proof}

We thus focus on maximizing the virtual surplus $R^\dagger\eqdef\sum_{i\in\set{1,2}}\E\big[\phi(V_i)\tpi_i(V_i)\big]$ subject to obedience and truthfulness. For the binary game~\eqref{eq:pi}, we write the virtual surplus as
\begin{displaymath}
	R^\dagger
	=\sum_{i=1}^n\E\bigg[\Big(\phi(V_i)-\frac\alpha{n-1}\sum_{j\neq i}\phi(V_j)\Big)\ind\set{A_i=\theta}\bigg].
\end{displaymath}
This objective function is of the form~\eqref{eq:meta} and we can thus apply \Cref{prop:master} to characterize the communication rule maximizing virtual surplus subject to obedience. Then, we verify that the corresponding expected  downstream payoff, $\tpi_i$, is non-decreasing, implying that the mechanism is implementable in a truthful and individually rational manner using the payments given by \eqref{eq:payments-bis}.

\begin{proposition}[Revenue Optimal Mechanism]\label{prop:revenue-alpha}

Consider the binary game with additive payoffs~\eqref{eq:pi} under \Cref{ass:lin}. Further assume that the buyers' types are identically distributed with absolutely continuous c.d.f.\ $F$. Denote by $F_\phi$ the c.d.f.\footnote{When $F$ is a regular distribution, the virtual value function $\phi$ is invertible and the c.d.f.\ $F_\phi$ can be computed as $F\circ\phi^{-1}$.}\ of $\phi(V_i)$ where $V_i$ is distributed according to $F$ and by $F_\phi^{(k)}$ the c.d.f.\ of the sum of $k$ i.i.d.\ variables distributed according to $F_\phi$.

Define $v^\star$ such that $F_\phi^{(n-1)}\big(\phi(v^\star)\big) = \pmax$ %\max_{k\in\set{0,1}} \P[\theta=k]$.
and $\overline\alpha\eqdef \alpha/(n-1)$. Then, the recommendation rule maximizing virtual surplus subject to obedience is the deterministic rule given by
\begin{displaymath}
A_i=\theta\quad\text{if and only if}\quad
\sum_{j\neq i} \phi(v_j)\leq\max\set{\phi(v^\star), \phi(v_i)/\overline\alpha}.
\end{displaymath}
\end{proposition}

\begin{proof}
	The proof is identical to the one of \Cref{prop:second-best} with $\phi(V_i)$
	playing the role of $V_i$. It follows from an application of
	\Cref{prop:master} with weight function $w_i(v)= \phi(v_i) -
	\overline\alpha\sum_{j\neq i}\phi(v_j)$.
\end{proof}

\begin{figure}[tbp]
\begin{tikzpicture}[scale=3,baseline=(current bounding box.west),
	declare function = {
	b2(\x)=(\x+1)/2;}]
	\pgfmathsetmacro{\vstar}{ln(2)}
	\pgfmathsetmacro{\vtilde}{b2(\vstar)}
\draw [<->]  (0,2) node [left] {$v_2$}  -- (0,0) -- (2,0) node [below] {$v_1$};
\draw [info] (1, 0) node [below,black] {$v_0$} -- (1, 1);
\draw [info] (0, 1) node [left,black] {$v_0$} -- (1, 1);
%\draw [info] (\vtilde, 0) node [below,black] {$\tilde{v}$} -- (\vtilde, \vstar);
%\draw [info] (0, \vtilde) node [left,black] {$\tilde{v}$} -- (\vstar, \vtilde);
\draw [info] (0, {b2(0)}) -- (2, {b2(2)});
\draw [info] ({b2(0)}, 0) -- ({b2(2)}, 2);
\draw [info] (\vstar, 0) -- (\vstar, 2); 
\draw [info] (0,\vstar) -- (2, \vstar);
\draw [thick] (\vstar,0) node [below] {$v^\star$} -- (\vstar, \vtilde) -- (2, {b2(2)});
\draw [thick] (0,\vstar) node [left] {$v^\star$} -- (\vtilde, \vstar) -- ({b2(2)}, 2);
\node at (0.5, 1.5) {$\set{2}$};
\node at (1.5, 0.5) {$\set{1}$};
\node at (1.5, 1.5) {$\set{1,2}$};
\node at (0.35, 0.35) {$\set{1,2}$};
\node at (0.8, 0.8) {$\varnothing$};
\node[above] at (current bounding box.north) {$\alpha=1/2$};
\end{tikzpicture}\hfill
\begin{tikzpicture}[scale=3,baseline=(current bounding box.west),
	declare function = {
		b2(\x)=2*\x-1; }]
	\pgfmathsetmacro{\vstar}{ln(2)}
	\pgfmathsetmacro{\vtilde}{b2(\vstar)}
\draw [<->]  (0,2) node [left] {$v_2$}  -- (0,0) -- (2,0) node [below] {$v_1$};
\draw [info] (1, 0) node [black,below] {$v_0$} -- (1, 1);
\draw [info] (0, 1) node [black,left] {$v_0$} -- (1, 1);
\draw [info] (0.5, {b2(0.5)}) -- (1.5, {b2(1.5)});
\draw [info] ({b2(0.5)}, 0.5) -- ({b2(1.5)}, 1.5);
\draw [info] (\vstar, 0) -- (\vstar, 2); 
\draw [info] (0,\vstar) -- (2, \vstar);
\draw [thick] (\vstar,0) node [below] {$v^\star$} -- (\vstar, \vtilde) -- (1.5, {b2(1.5)});
\draw [thick] (0,\vstar) node [left] {$v^\star$} -- (\vtilde, \vstar) -- ({b2(1.5)}, 1.5);
\node at (0.5, 1.5) {$\set{2}$};
\node at (1.5, 0.5) {$\set{1}$};
\node at (1.5, 1.5) {$\varnothing$};
\node at (0.4, 0.4) {$\set{1,2}$};
\node[above] at (current bounding box.north) {$\alpha=2$};
\end{tikzpicture}
\caption{Revenue-maximizing recommendation rule from \Cref{prop:revenue-alpha}
for $\alpha=1/2$ (left) and $\alpha=2$ (right). Types are distributed
exponentially, so that $\phi(v)=v-1$ and $v_0=\phi^{-1}(0) = 1$. The prior on
$\theta$ is symmetric ($\pmax=1/2$), hence $v^\star=F^{-1}(1/2)=\ln 2<v_0$.}
\label{fig:blocks-revenue}
\end{figure}
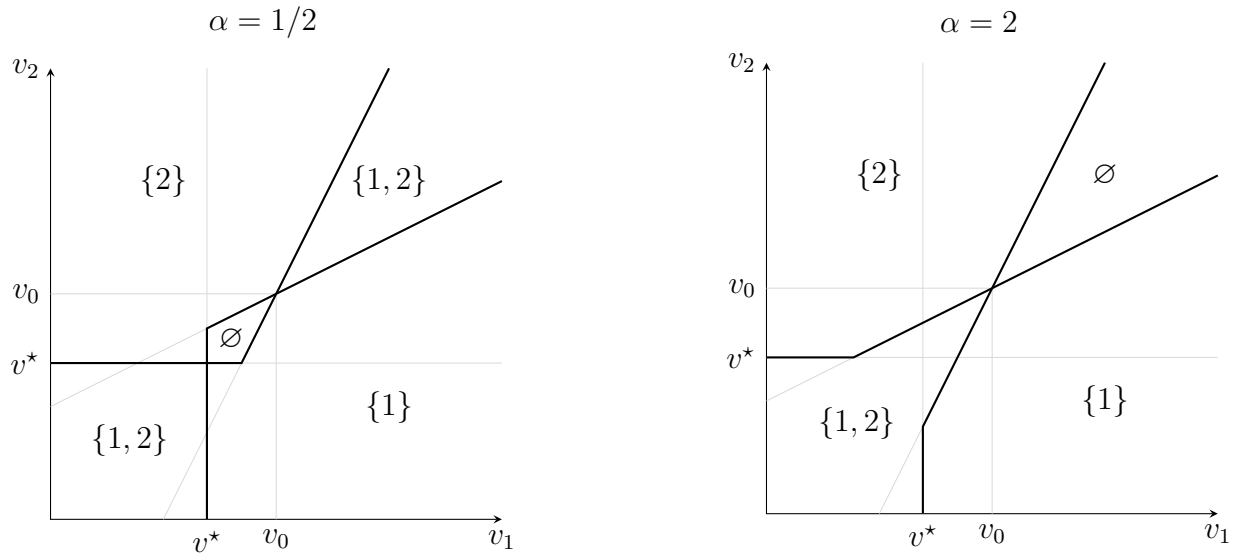

%\ale{WE NEED A SIMPLE PICTURE OF THIS MECHANISM for just one set of parameter values. Then we move the long discussion to section 5.1, where the case $\pmax=0$ corresponds to ``ignore obedience constraints'', which is the first line of the current explanations}

The functional form of the revenue-optimal mechanism in
\Cref{prop:revenue-alpha} is analogous to that of the welfare-optimal mechanism
in \Cref{prop:second-best}, after replacing the buyers' types with their
virtual types. Figure~\ref{fig:blocks-revenue} shows the resulting
recommendation rule for $n=2$ buyers both when $\alpha < 1$ and $\alpha > 1$.
Again, this recommendation can be understood as the superposition of two
recommendation rules:
\begin{itemize}
	\item The first rule recommends the correct action to buyer $i$ if and only if the virtual type of the other buyer satisfies $\phi(v_j)\leq \phi(v^\star)$, or equivalently since $F$ is regular, $v_j\leq v^\star$. This is exactly the same mechanism as was shown in \Cref{fig:min-externality} (left) guaranteeing the obedience of buyer $i$.
	\item The second rule recommends the correct action to buyer $i$ if and only if $\phi(v_i)\leq \phi(v_j)/\alpha$. In particular, when one virtual valuation is large compared to the other ($\phi(v_i)/\phi(v_j)\geq\max\set{\alpha,1/\alpha}$), buyer $i$ is recommended the correct action exclusively. However, because the functions $v\mapsto \phi^{-1}(\phi(v)/\alpha)$ and $v\mapsto\phi^{-1}(\alpha\phi(v))$ intersect at $v_0\eqdef \phi^{-1}(0)$, the intermediate regime $\phi(v_i)/\phi(v_j)<\max\set{\alpha,1/\alpha}$ now determines two regions in which both buyers are recommended the same action. When virtual valuations are positive (types greater than $v_0$), both buyers are recommended the correct action when $\alpha < 1$ and the wrong action when $\alpha > 1$. Indeed, in this latter case, the buyers face a prisoners' dilemma in which coordinating on the dominated “wrong” action results in higher payoffs. Naturally, the situation is reversed when virtual values are negative in the intermediate regime: both buyers receive the wrong action when $\alpha<1$ and the correct one when $\alpha>1$. This is shown in \Cref{fig:surplus-alpha} below.
\end{itemize}
The revenue-optimal mechanism resulting from the superposition of these two mechanisms depends both qualitatively and quantitatively on the relative positions of $v_0$ and $v^\star$. This in turn depends on the magnitude of the parameter $\pmax$ and is discussed in \Cref{sec:prior} below.

\iffalse
To gain some insight into the optimal mechanism, let us first ignore the
obedience constraints, in which case the recommendation rule maximizing the virtual
surplus is determined by
\begin{displaymath}
\P[A_i=\theta\given V] = \ind\set[\big]{\phi(V_j)\leq \phi(V_i)/\alpha}\,.
\end{displaymath}
This recommendation rule is shown in \Cref{prop:second-best} and is
similar to the welfare maximizing allocation, with virtual valuations replacing types. 
\fi

\iffalse
As in the welfare maximization setting, in order to satisfy the obedience
constraints, one needs to further intersect the regions described above with
the lines $v_2=v^\star$ and $v_1=v^\star$ and enforce that $A_i=\theta$ when
$v_j\leq v^\star$. This is shown in \Cref{fig:blocks-revenue}.\footnote{Depending on the type distribution, we may have $v^\star\geq v_0$ for any prior on the unknown state. For example, for types uniformly distributed over $[0,1]$, $\phi(v) = v-1$ and $v_0=\phi^{-1}(0)
= 1/2\leq \max\set{\P[\theta=0],\P[\theta=1]}=v^\star$.} 
\fi

\Cref{prop:rev-implementability} below gives an expression for the expected downstream payoff $\tpi_i$ of each buyer in the obedient mechanism described above, and states that it is non-decreasing. Consequently, the mechanism above is also \emph{truthful (implementable)} and the payments are then given by \Cref{lem:vv}. Given that these payments are decreasing as a function on $K$, the downstream payoff of a non-participating buyer, we must therefore design the outside option so as to minimize $K$.
%recall from the discussion of the participation constraint in \Cref{sec:def} that a complete description of the mechanism must specify the allocation to the remaining buyers in case some buyers choose not to participate. The optimal mechanism is then obtained by choosing this allocation so as to minimize the reservation utility of a non-participating buyer, which then allows for the largest possible value for the constant of integration in \Cref{lem:vv}.
For the binary game with additive payoffs, the following proposition
establishes that the optimal allocation when $i$ does not participate recommends the correct action to the set $[n]\setminus\set{i}$ of all participating buyers.% The next proposition completes the description of the optimal mechanism.

\begin{proposition}[Implementation of Optimal Mechanism]\label{prop:rev-implementability}
For the mechanism of \Cref{prop:revenue-alpha} and under the same assumptions, the interim downstream payoff $\tpi_i$ of buyer $i\in[n]$ is the non-decreasing function
\begin{multline*}
\tpi_i(v_i) =
\max\set[\big]{F_\phi^{(n-1)}\big(\phi(v^\star)\big),F_\phi^{(n-1)}(\phi(v_i)/\overline\alpha)}\\
-\overline\alpha\sum_{j\neq i}
\E\Big[F_\phi^{(n-2)}\big(\max\set[\big]{\phi(v^\star),\phi(V_j)/\overline\alpha}-\phi(v_i)\big)\Big],
\end{multline*}
and the revenue-maximizing mechanism is therefore implementable in a truthful manner.

In case of non-participation of buyer $i\in[n]$, the recommendation rule minimizing their reservation utility recommends the correct action to the remaining buyers ($A_j=\theta$ for $j\neq i$). For this outside option, the payments maximizing revenue subject to individual rationality and truthfulness are given by
\begin{displaymath}
\tp_i(v_i) = v_i\cdot\tpi_i(v_i) - \int_{\uv}^{v_i}\tpi_i(s)ds
+\uv\alpha -\uv\cdot\pmax.%\max_{k\in\set{0,1}}\P[\theta=k].
\end{displaymath}
\end{proposition}

\begin{proof}See \Cref{sec:app-revenue}.\end{proof}

For the outside option in \Cref{prop:rev-implementability}, the optimal strategy of a non-participating buyer is simply to play the action matching the most likely state under the prior, resulting in the buyer being correct with probability $\pmax$. Furthermore, the externality incurred by a non-participating buyer is $(n-1)\overline\alpha=\alpha$, because all participating buyers receive the correct action recommendation in this case. Hence, the reservation utility of a non-participating buyer is $\pmax-\alpha$: this is precisely the offset appearing in the expression for $\tp_i$, in \Cref{prop:rev-implementability} guaranteeing buyer $i$'s participation.

%\tbo{would it make sense to move the next two paragraphs to \Cref{sec:externality}?}
We now remark on several striking properties of the optimal payments, which apply whenever $v^\star<v_0$, as in \Cref{fig:blocks-revenue}. 

\begin{enumerate}
\item Unlike in settings without externalities, merely having a negative virtual value does not imply a buyer receives no information. Even absent obedience constraints, the seller knows that distorting one buyer's recommendation increases the surplus of the other buyer. Therefore, when $\alpha<1$ both buyers receive the wrong recommendation only if both their virtual values are negative \emph{and} they are sufficiently similar.  Conversely, if both virtual values are negative but $v_1$ is sufficiently larger than $v_2$, then the seller prefers issuing the correct recommendation to buyer $1$. Indeed, distorting the recommendation to buyer $1$ would increase buyer $2$'s payoff, which has an even stronger negative impact on the seller's profits. %The situation is even more extreme when $\alpha>1$: in this case, the externalities are so strong that having both buyers pick the dominant strategy (``defect'' in the prisoners' dilemma) results in the worst payoffs and hence best surplus.

\item Some types of buyer $i$ with a negative virtual valuation $v_i<v_0$, %, $\tilde{v}<v_i<v_0$, 
are nonetheless charged a positive payment. This occurs  because these types are sufficiently high that their opponent $j$ has an even lower type $v_j$ with a significant probability, $F(v_i)$. In other words, the seller finds it optimal to reveal the correct state to buyer $i$ with probability, $F(\phi^{-1}(\phi(v_i)/\alpha))>F(v^\star)$. Buyer $i$ then has a strict incentive to follow the seller's recommendation, i.e., her obedience constraint is slack.

\item Some types of buyer $i$ such that $v^\star<v_i$, %<\tilde{v}$, and 
 whose obedience constraint binds, still pay a strictly  positive price. Because their obedience constraint is binding, these types derive no net utility from following the seller's recommendation. However, unlike types in $[0,v^\star]$ where the other data buyer always receives the right recommendation, these types' opponent is revealed the correct state with probability $1-F(\phi^{-1}(\alpha\phi(v_i)))$. These types are strictly better off participating, and they can be charged a positive payment. Thus, the presence of negative externalities augments the  profitability of selling information, as the seller charges positive payments in exchange for limiting the information available to each buyer's competitors.
\end{enumerate}

\section{Information and Competition}\label{sec:comp}

In this section, we discuss the impact of the environment facing the buyers on the optimal mechanisms presented in \Cref{sec:lin}. This encompasses both the information structure and in particular the buyers' prior information discussed in \Cref{sec:prior} as well as the competition structure in the downstream game as captured by the externality parameter $\alpha$ (\Cref{sec:externality}) and the number of buyers (\Cref{sec:nplayers}).

\subsection{Buyers’ Prior Information}\label{sec:prior}

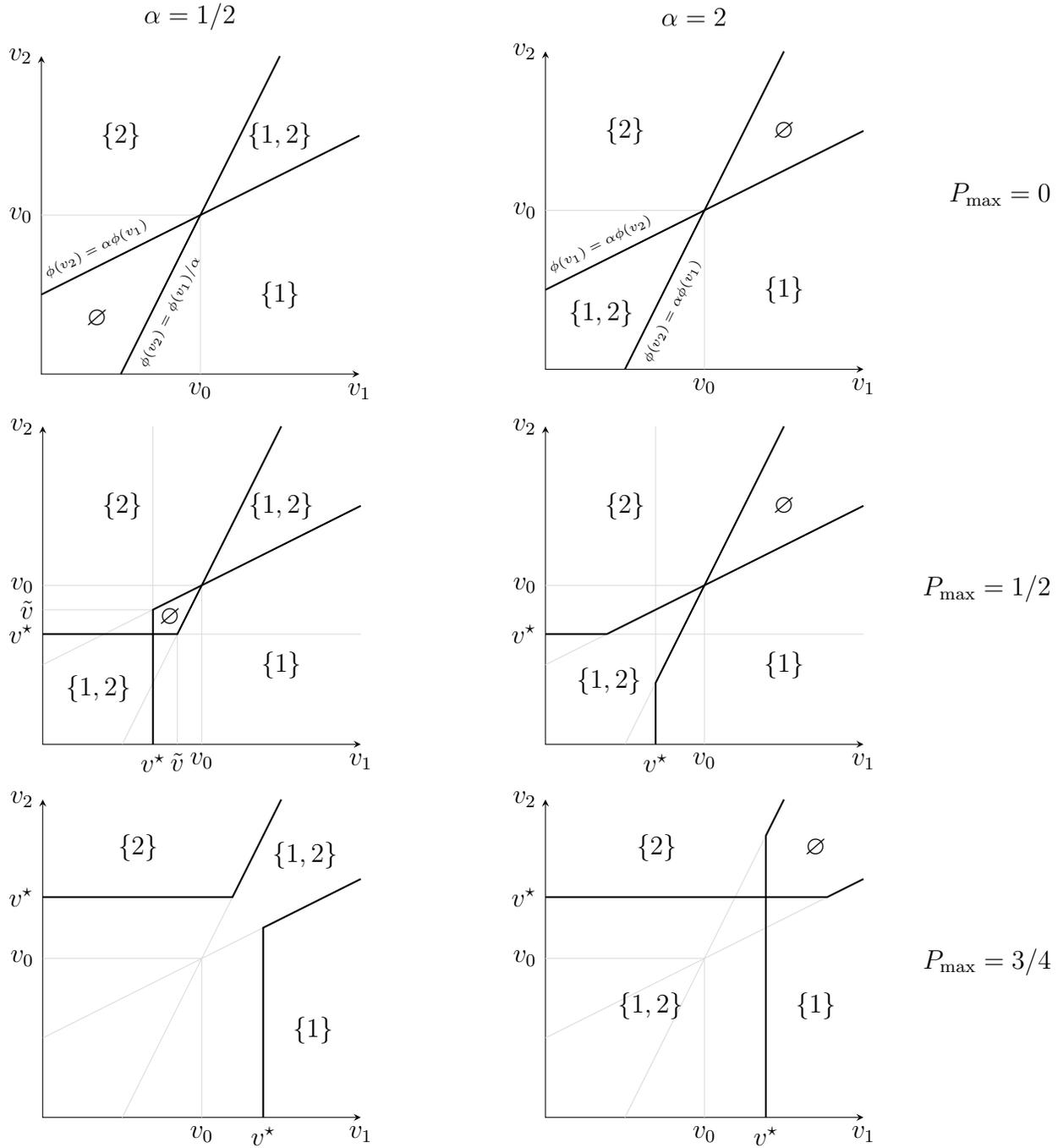
\begin{figure}[!htp]
	\begin{tikzpicture}[baseline=(current bounding box.west),scale=2.5]
\draw [<->]  (0,2) node [left] {$v_2$}  -- (0,0) -- (2,0) node [below] {$v_1$};
\draw [info] (1, 0) node [below, color=black] {$v_0$} -- (1, 1);
\draw [info] (0, 1) node [left, color=black] {$v_0$} -- (1, 1);
\draw [thick] (0, 0.5) -- (2, 1.5) node [above right, pos=0.01, sloped, font=\tiny] {$\phi(v_2)=\alpha\phi(v_1)$} ;
\draw [thick] (0.5, 0) -- (1.5, 2) node[below right, pos=0.02, sloped, font=\tiny] {$\phi(v_2)=\phi(v_1)/\alpha$};
\node at (0.5, 1.5) {$\set{2}$};
\node at (1.5, 0.5) {$\set{1}$};
\node at (1.5, 1.5) {$\set{1,2}$};
\node at (0.35, 0.35) {$\varnothing$};
\node[above] at (current bounding box.north) {$\alpha=1/2$};
\end{tikzpicture}\hfill
\begin{tikzpicture}[baseline=(current bounding box.west),scale=2.5]
\draw [<->]  (0,2) node [left] {$v_2$}  -- (0,0) -- (2,0) node [below] {$v_1$};
\draw [info] (1, 0) node [below, color=black] {$v_0$} -- (1, 1);
\draw [info] (0, 1) node [left, color=black] {$v_0$} -- (1, 1);
\draw [thick] (0, 0.5) -- (2, 1.5) node [above right, pos=0.01, sloped, font=\tiny] {$\phi(v_1)=\alpha\phi(v_2)$} ;
\draw [thick] (0.5, 0) -- (1.5, 2) node[below right, pos=0.02, sloped, font=\tiny] {$\phi(v_2)=\alpha\phi(v_1)$};
\node at (0.5, 1.5) {$\set{2}$};
\node at (1.5, 0.5) {$\set{1}$};
\node at (1.5, 1.5) {$\varnothing$};
\node at (0.35, 0.35) {$\set{1,2}$};
\node[above] at (current bounding box.north) {$\alpha=2$};
\end{tikzpicture}
\begin{minipage}{6em}
	\begin{flushright}

$\pmax=0$
	\end{flushright}
\end{minipage}

\begin{tikzpicture}[scale=2.5,baseline=(current bounding box.west),
	declare function = {
	b2(\x)=(\x+1)/2;}]
	\pgfmathsetmacro{\vstar}{ln(2)}
	\pgfmathsetmacro{\vtilde}{b2(\vstar)}
\draw [<->]  (0,2) node [left] {$v_2$}  -- (0,0) -- (2,0) node [below] {$v_1$};
\draw [info] (1, 0) node [below,black] {$v_0$} -- (1, 1);
\draw [info] (0, 1) node [left,black] {$v_0$} -- (1, 1);
\draw [info] (\vtilde, 0) node [below,black] {$\tilde{v}$} -- (\vtilde, \vstar);
\draw [info] (0, \vtilde) node [left,black] {$\tilde{v}$} -- (\vstar, \vtilde);
\draw [info] (0, {b2(0)}) -- (2, {b2(2)});
\draw [info] ({b2(0)}, 0) -- ({b2(2)}, 2);
\draw [info] (\vstar, 0) -- (\vstar, 2); 
\draw [info] (0,\vstar) -- (2, \vstar);
\draw [thick] (\vstar,0) node [below] {$v^\star$} -- (\vstar, \vtilde) -- (2, {b2(2)});
\draw [thick] (0,\vstar) node [left] {$v^\star$} -- (\vtilde, \vstar) -- ({b2(2)}, 2);
\node at (0.5, 1.5) {$\set{2}$};
\node at (1.5, 0.5) {$\set{1}$};
\node at (1.5, 1.5) {$\set{1,2}$};
\node at (0.35, 0.35) {$\set{1,2}$};
\node at (0.8, 0.8) {$\varnothing$};
\end{tikzpicture}\hfill
\begin{tikzpicture}[scale=2.5,baseline=(current bounding box.west),
	declare function = {
		b2(\x)=2*\x-1; }]
	\pgfmathsetmacro{\vstar}{ln(2)}
	\pgfmathsetmacro{\vtilde}{b2(\vstar)}
\draw [<->]  (0,2) node [left] {$v_2$}  -- (0,0) -- (2,0) node [below] {$v_1$};
\draw [info] (1, 0) node [black,below] {$v_0$} -- (1, 1);
\draw [info] (0, 1) node [black,left] {$v_0$} -- (1, 1);
\draw [info] (0.5, {b2(0.5)}) -- (1.5, {b2(1.5)});
\draw [info] ({b2(0.5)}, 0.5) -- ({b2(1.5)}, 1.5);
\draw [info] (\vstar, 0) -- (\vstar, 2); 
\draw [info] (0,\vstar) -- (2, \vstar);
\draw [thick] (\vstar,0) node [below] {$v^\star$} -- (\vstar, \vtilde) -- (1.5, {b2(1.5)});
\draw [thick] (0,\vstar) node [left] {$v^\star$} -- (\vtilde, \vstar) -- ({b2(1.5)}, 1.5);
\node at (0.5, 1.5) {$\set{2}$};
\node at (1.5, 0.5) {$\set{1}$};
\node at (1.5, 1.5) {$\varnothing$};
\node at (0.4, 0.4) {$\set{1,2}$};
\end{tikzpicture}
\begin{minipage}{6em}
	\begin{flushright}
		%2\textsuperscript{nd} best\\
		$\pmax=1/2$
	\end{flushright}
\end{minipage}

\begin{tikzpicture}[scale=2.5,baseline=(current bounding box.west),
	declare function = {
	b2(\x)=(\x+1)/2;}]
	\pgfmathsetmacro{\vstar}{2*ln(2)}
	\pgfmathsetmacro{\vtilde}{b2(\vstar)}
\draw [<->]  (0,2) node [left] {$v_2$}  -- (0,0) -- (2,0) node [below] {$v_1$};
\draw [info] (1, 0) node [black,below] {$v_0$} -- (1, 1);
\draw [info] (0, 1) node [black,left] {$v_0$} -- (1, 1);
\draw [info] (0, {b2(0)}) -- (2, {b2(2)});
\draw [info] ({b2(0)}, 0) -- ({b2(2)}, 2);
\draw [thick] (\vstar,0) node [below] {$v^\star$} -- (\vstar, \vtilde) -- (2, {b2(2)});
\draw [thick] (0,\vstar) node [left] {$v^\star$} -- (\vtilde, \vstar) -- ({b2(2)}, 2);
\node at (0.6, 1.7) {$\set{2}$};
\node at (1.7, 0.55) {$\set{1}$};
\node at (1.65, 1.65) {$\set{1,2}$};
\end{tikzpicture}\hfill
\begin{tikzpicture}[scale=2.5,baseline=(current bounding box.west),
	declare function = {
		b2(\x)=2*\x-1;}]
		\pgfmathsetmacro{\vstar}{2*ln(2)}
		\pgfmathsetmacro{\vtilde}{b2(\vstar)}
\draw [<->]  (0,2) node [left] {$v_2$}  -- (0,0) -- (2,0) node [below] {$v_1$};
\draw [info] (1, 0) node [black,below] {$v_0$} -- (1, 1);
\draw [info] (0, 1) node [black,left] {$v_0$} -- (1, 1);
\draw [info] (0.5, {b2(0.5)}) -- (1.5, {b2(1.5)});
\draw [info] ({b2(0.5)}, 0.5) -- ({b2(1.5)}, 1.5);
\draw [thick] (\vstar,0) node [below] {$v^\star$} -- (\vstar, \vtilde) -- (1.5, {b2(1.5)});
\draw [thick] (0,\vstar) node [left] {$v^\star$} -- (\vtilde, \vstar) -- ({b2(1.5)}, 1.5);
\node at (0.7, 1.7) {$\set{2}$};
\node at (1.7, 0.7) {$\set{1}$};
\node at (1.7, 1.7) {$\varnothing$};
\node at (0.66, 0.7) {$\set{1,2}$};
\end{tikzpicture}
\begin{minipage}{6em}
	\begin{flushright}
		%2\textsuperscript{nd} best\\
		$\pmax =3/4$
	\end{flushright}
\end{minipage}
\caption{Revenue-maximizing recommendation rule from \Cref{prop:revenue-alpha}
for $\alpha=1/2$ (left) and $\alpha=2$ (right). Types are distributed
exponentially, so that $\phi(v) = v-1$ and $v_0=\phi^{-1}(0)=1$. The first row
shows the first best mechanism. The second row is the second-best mechanism
(subject to obedience) with a symmetric prior on $\theta$, for which $v^\star
= F^{-1}(1/2)=\ln 2<v_0$. The third row is the second-best mechanism with an
asymmetric prior ($p_{\max}=3/4$), for which $v^\star=\ln 4> v_0$.}
\label{fig:surplus-alpha}
\end{figure}

%The seller's information improves the buyer's profit in the downstream game in a way that in principle depends on their competitors' behavior (even though our game with dominant strategies shuts this down). 
The seller's information augments the buyers' prior information and allows them to tailor their actions to the state of the world. But each buyer also has the option of playing the downstream game under their prior information only. Thus, each buyer's participation constraint is tighter when buyers are better informed, and the seller cannot extract all the buyers' surplus through transfers. 

Furthermore,  while the seller is unconstrained in her choice of experiments, the buyers retain the flexibility to choose their actions after observing the signals. These signals must then be sufficiently informative \emph{relative to the buyers' prior} in order for the data buyers to follow them. In particular, the buyers can always ignore the recommendation altogether and choose the action that is optimal under the prior, or choose actions that respond to signals in a different way than the seller intended. Thus, not all distributions over action profiles in the downstream game are feasible for the seller, due to the buyers' obedience constraints. 

In our binary setting, the seller's problem therefore depends critically on a scalar parameter: the informativeness of the buyers' prior beliefs, as captured by $\pmax\eqdef\max_{k\in\set{0,1}}\P[\theta=k]$. Indeed, $\pmax$ describes both novel aspects of our  seller's problem: the buyer's reservation utility that corresponds to foregoing participation in the mechanism (in which case  the seller fully reveals the state to the $n-1$ other buyers); and the buyer's option value of participating but ignoring the seller's recommendations. Formally, the obedience and participation constraints can be written as 
\begin{align*}
	\P[A_i=\theta\given V_i=v_i]&\geq \pmax, \\
    v_i\tpi_i(v_i)-p_i(v_i)&\geq v_i\left(\pmax-\alpha\right).
\end{align*}

\Cref{fig:surplus-alpha} illustrates the revenue-maximizing mechanism for $n=2$ buyers as we vary the parameter $\pmax$, for both $\alpha<1$ and $\alpha>1$. The top row describes the benchmark case where we artificially set $\pmax=0$ in the constraints above. This corresponds to a setting akin to the sale of physical goods with externalities: there are no downstream actions for buyers to take (hence  no obedience constraints), and each buyer's reservation utility consists of not receiving the good while their competitors all receive it with probability $1$, so that $\tpi=-\alpha$.

As $\pmax$ increases, as in the second row, the revenue-maximizing mechanism must assign the correct action to both buyers whenever \emph{their competitor's} type is below the critical level $v^\star:=F^{-1}(\pmax)$. When $v^\star\geq v_0$ (as in the third row), the square where $v_1,v_2\leq v^\star$, in which obedience requires recommending the optimal action to both buyers, fully contains the region with
negative virtual valuations and the situation looks qualitatively the same as
\Cref{fig:welfare-mech}.\footnote{Depending on the type distribution, we may have $v^\star\geq v_0$ for any prior on the unknown state. For example, for types uniformly distributed over $[0,1]$, $\phi(v) = v-1$ and $v_0=\phi^{-1}(0)= 1/2\leq \pmax=v^\star$.}
In all cases obedience binds for buyer $i$ for all types such that $\phi(v_i)\leq\alpha\phi(v^\star)$, or equivalently $v_i\leq\tilde{v}\eqdef\phi^{-1}(\alpha \phi(v^\star))$.
%Note that $\tilde v$ is between $v^\star$ and $v_0$ when $\alpha <1$ and further away from $v_0$ than $v^\star$ when $\alpha > 1$.

Finally, \Cref{fig:mech-payment-pmax} shows the payments under the revenue-optimal mechanism for several values of $\pmax$. As previewed in \Cref{sec:revenue}, the model without obedience constraints $(\pmax=0)$ has positive payments for almost all types. In contrast, the precision of the buyers' prior information (captured by $\pmax\geq 1/2$) prevents the seller from charging any payments to a positive measure of types, despite the presence negative externalities. This is in sharp contrast with the sale of \emph{final goods} with externalities, e.g., \cite{jems96}.

\begin{figure}
\noindent\begin{tikzpicture}[baseline]
	\begin{axis}[
		xlabel=type,
		ylabel=payment,
		width=0.5\textwidth,
		cycle list/Dark2,
		legend cell align=left,
		legend entries={$\pmax=0$, $\pmax=1/2$, $\pmax=3/4$},
		legend pos=south east,
		title={$\alpha=1/2$},
		xmax=3,
		%extra x ticks=0.693,
		%extra x tick labels=$v^\star$,
		%extra x tick style={grid=major, grid style={solid,gray!30},tick style={draw=none}},
	]
		\addplot+[line width=0.75pt,mark=none] table [col sep=comma,x=value,y=pay0.5] {mech-fb-exp-pmax-0.5.dat};
		\addplot+[line width=0.75pt,mark=none] table [col sep=comma,x=value,y=pay0.5] {mech-exp-pmax-0.5.dat};
		\addplot+[line width=0.75pt,mark=none] table [col sep=comma,x=value,y=pay0.5] {mech-exp-pmax-0.75.dat};
	\end{axis}
\end{tikzpicture}\hfill\begin{tikzpicture}[baseline]
	\begin{axis}[
		xlabel=type,
		ylabel=payment,
		width=0.5\textwidth,
		cycle list/Dark2,
		legend cell align=left,
		legend entries={$\pmax=0$, $\pmax=1/2$, $\pmax=3/4$},
		legend pos=south east,
		title={$\alpha=2$},
		xmax=3,
		%extra x ticks=0.693,
		%extra x tick labels=$v^\star$,
		%extra x tick style={grid=major, grid style={solid,gray!30},tick style={draw=none}},
	]
		\addplot+[line width=0.75pt,mark=none] table [col sep=comma,x=value,y=pay2] {mech-fb-exp-pmax-0.5.dat};
		\addplot+[line width=0.75pt,mark=none] table [col sep=comma,x=value,y=pay2] {mech-exp-pmax-0.5.dat};
		\addplot+[line width=0.75pt,mark=none] table [col sep=comma,x=value,y=pay2] {mech-exp-pmax-0.75.dat};
	\end{axis}
\end{tikzpicture}\hfill
\caption{Payment as a function of a buyer's type, for different values of
$\pmax$ with exponentially distributed types. Left panel: $\alpha=1/2$. Right panel: $\alpha=2$.}
\label{fig:mech-payment-pmax}
\end{figure}
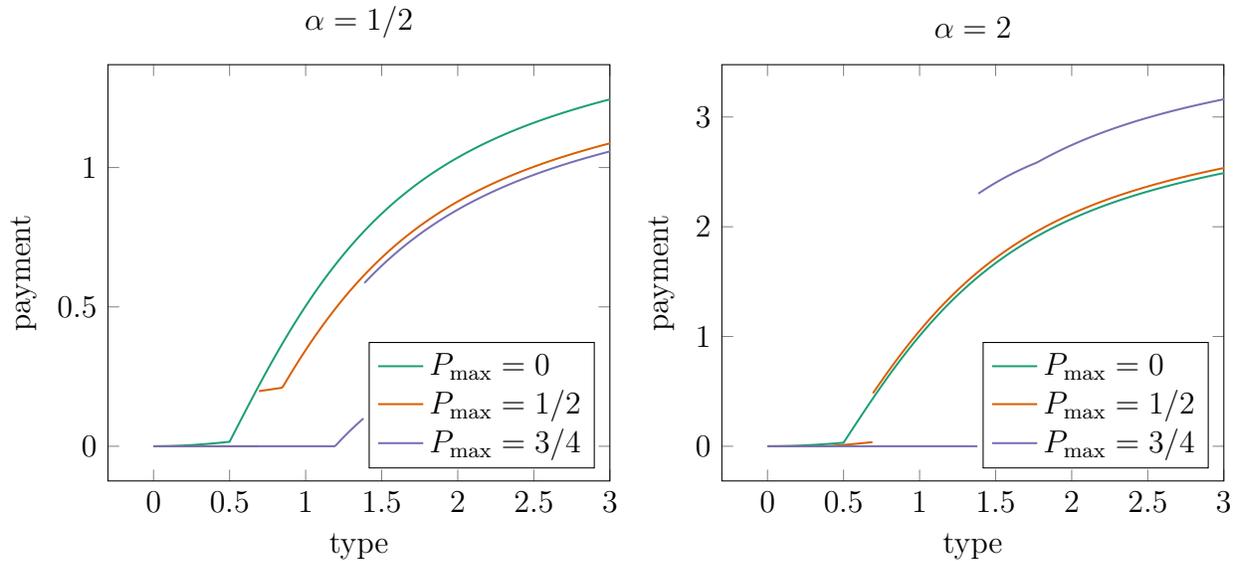

\subsection{Externality Parameter \texorpdfstring{$\alpha$}{α}}\label{sec:externality}
We now investigate the effect of the intensity of downstream competition on the revenue-optimal mechanism. Figure \ref{fig:alpha-comp} compares two settings, where competition is fiercer in the left panel $(\alpha=1/2)$ than in the right panel $(\alpha=1/4)$. 

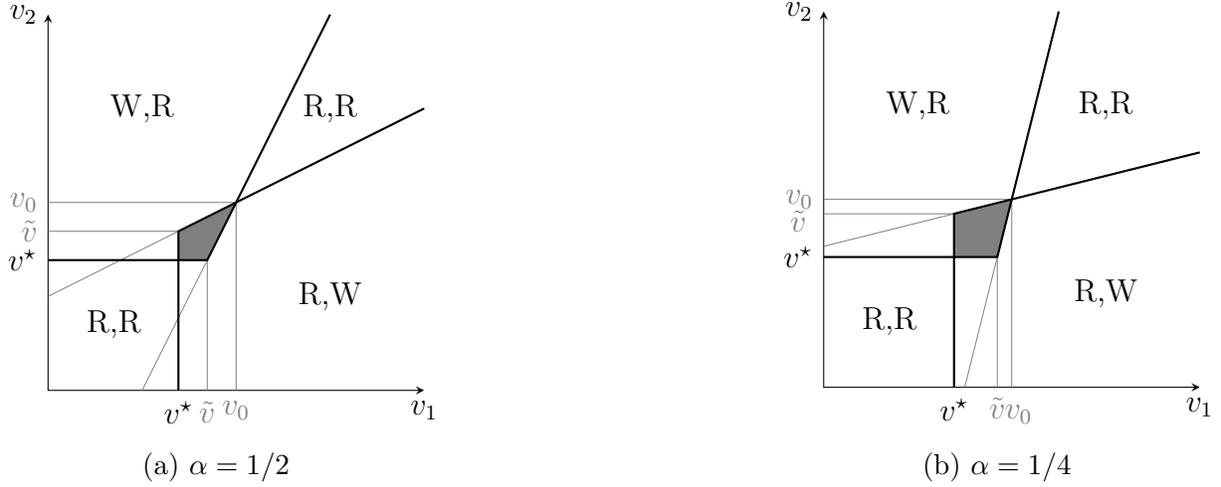
\begin{figure}[!ht]
\begin{center}
	\subcaptionbox{$\alpha=1/2$}{
\begin{tikzpicture}[scale=2.5]
\path [fill=gray] (0.6931,0.6931) -- (0.84655, 0.6931) -- (1,1) -- (0.6931,
0.84655) -- (0.6931, 0.6931);
\draw [<->]  (0,2) node [left] {$v_2$}  -- (0,0) -- (2,0) node [below] {$v_1$};
\draw [help lines] (1, 0) node [below] {$v_0$} -- (1, 1);
\draw [help lines] (0, 1) node [left] {$v_0$} -- (1, 1);
\draw [help lines] (0.84655, 0) node [below] {$\tilde{v}$} -- (0.84655, 0.6931);
\draw [help lines] (0, 0.84655) node [left] {$\tilde{v}$} -- (0.6931, 0.84655);
%\draw [domain=0:2, help lines] plot (\x, {0.5*\x+0.5});
\draw [thick, help lines] (0, 0.5) -- (2, 1.5);
\draw [thick, help lines] (0.5, 0) -- (1.5, 2);
\draw [thick] (0.6931,0) node [below] {$v^\star$} -- (0.6931, 0.84655) -- (2, 1.5);
\draw [thick] (0,0.6931) node [left] {$v^\star$} -- (0.84655, 0.6931) -- (1.5, 2);
\node at (0.5, 1.5) {W,R};
\node at (1.5, 0.5) {R,W};
\node at (1.5, 1.5) {R,R};
\node at (0.35, 0.35) {R,R};
\end{tikzpicture}
}\hfill\subcaptionbox{$\alpha=1/4$}{\begin{tikzpicture}[scale=2.5]
\path [fill=gray] (0.6931,0.6931) -- (0.9232, 0.6931) -- (1,1) -- (0.6931, 0.9232) -- (0.6931, 0.6931);
\draw [<->]  (0,2) node [left] {$v_2$}  -- (0,0) -- (2,0) node [below] {$v_1$};
\draw [help lines] (1, 0) node [below,yshift=-0.5ex,xshift=+0.5ex] {$v_0$} -- (1, 1);
\draw [help lines] (0, 1) node [left] {$v_0$} -- (1, 1);
\draw [help lines] (0.9232, 0) node [below] {$\tilde{v}$} -- (0.9232, 0.6931);
\draw [help lines] (0, 0.9232) node [left,yshift=-0.5ex,xshift=-0.5ex] {$\tilde{v}$} -- (0.6931, 0.9232);
%\draw [domain=0:2, help lines] plot (\x, {0.5*\x+0.5});
\draw [thick, help lines] (0, 0.75) -- (2, 1.25);
\draw [thick, help lines] (0.75, 0) -- (1.25, 2);
\draw [thick] (0.6931,0) node [below] {$v^\star$} -- (0.6931, 0.9232) -- (2, 1.25);
\draw [thick] (0,0.6931) node [left] {$v^\star$} -- (0.9232, 0.6931) -- (1.25, 2);
\node at (0.5, 1.5) {W,R};
\node at (1.5, 0.5) {R,W};
\node at (1.5, 1.5) {R,R};
\node at (0.35, 0.35) {R,R};
\end{tikzpicture}}
\caption{Comparison of the revenue-maximizing recommendation rules from \Cref{prop:revenue-alpha} for two different values of $\alpha$. The mechanism recommends the wrong action $(1-\theta)$ to both buyers in the gray regions. As in \Cref{fig:surplus-alpha}, types are exponentially distributed and states are equally likely: the larger the value of $\alpha$, the more competitive the downstream game.}
\label{fig:alpha-comp}
\end{center}
\end{figure}

Reducing the intensity of competition reduces the value of exclusive sales of information (i.e., recommending the right action to one buyer only) in the first best: at one extreme, if buyers imposed no externalities on each other, the seller would recommend the right action to any buyer with a positive virtual value. In particular, in an unconstrained revenue problem, the seller would recommend the wrong action to both buyers more often when competition is weaker. However, this recommendation profile would violate obedience, which requires the seller to recommend the right action to both buyers when both their types are smaller than $v^\star$. As $v^\star$ is independent of $\alpha$, the right panel shows how the seller uses exclusive sales as a second-best policy under obedience constraints more often as the competition weakens.

\begin{figure}[!ht]
\noindent\begin{tikzpicture}[baseline]
	\begin{axis}[
		xlabel=type,
		ylabel=downstream payoff $\tpi_i$,
		width=0.49\textwidth,
		cycle list/Dark2,
		legend cell align=left,
		legend entries={$\alpha=1/2$, $\alpha=1$, $\alpha=3/2$},
		legend pos=south east,
		extra x ticks=0.693,
		extra x tick labels=$v^\star$,
		extra x tick style={grid=major, grid style={solid,gray!30},tick style={draw=none}},
	]
		%\addplot+[line width=1pt,mark=none] table [col sep=comma,x=value,y=share0.25] {mech-pmax-point5.dat};
		\addplot+[line width=1pt,mark=none] table [col sep=comma,x=value,y=share0.5] {mech-exp-pmax-0.5.dat};
		\addplot+[line width=1pt,mark=none] table [col sep=comma,x=value,y=share1] {mech-exp-pmax-0.5.dat};
		\addplot+[line width=1pt,mark=none] table [col sep=comma,x=value,y=share1.5] {mech-exp-pmax-0.5.dat};
	\end{axis}
\end{tikzpicture}
\hfill
\begin{tikzpicture}[baseline]
	\begin{axis}[
		xlabel=type,
		ylabel=payment,
		width=0.49\textwidth,
		cycle list/Dark2,
		legend cell align=left,
		legend entries={$\alpha=1/2$, $\alpha=1$, $\alpha=3/2$},
		legend pos=south east,
		extra x ticks=0.693,
		extra x tick labels=$v^\star$,
		extra x tick style={grid=major, grid style={solid,gray!30},tick style={draw=none}},
	]
		\addplot+[line width=1pt,mark=none] table [col sep=comma,x=value,y=pay0.5] {mech-exp-pmax-0.5.dat};
		\addplot+[line width=1pt,mark=none] table [col sep=comma,x=value,y=pay1] {mech-exp-pmax-0.5.dat};
		\addplot+[line width=1pt,mark=none] table [col sep=comma,x=value,y=pay1.5] {mech-exp-pmax-0.5.dat};
	\end{axis}
\end{tikzpicture}
\caption{Interim downstream payoff and payment as functions of a buyer's type in
	the optimal mechanism for different values of $\alpha$. Types are
	exponentially distributed and the prior on the unknown state is uniform
	($\pmax=1/2$). There is a discontinuity at $v^\star=\ln 2<v_0$ and
	a singularity at $\tilde v_\alpha
= \phi^{-1}\big(\alpha\phi(v^\star)\big)$. The minimum expected downstream payoff,
$\tpi_i(\uv)$, is $\pmax -\alpha$.}
\label{fig:opt-mech-uniform}
\end{figure}
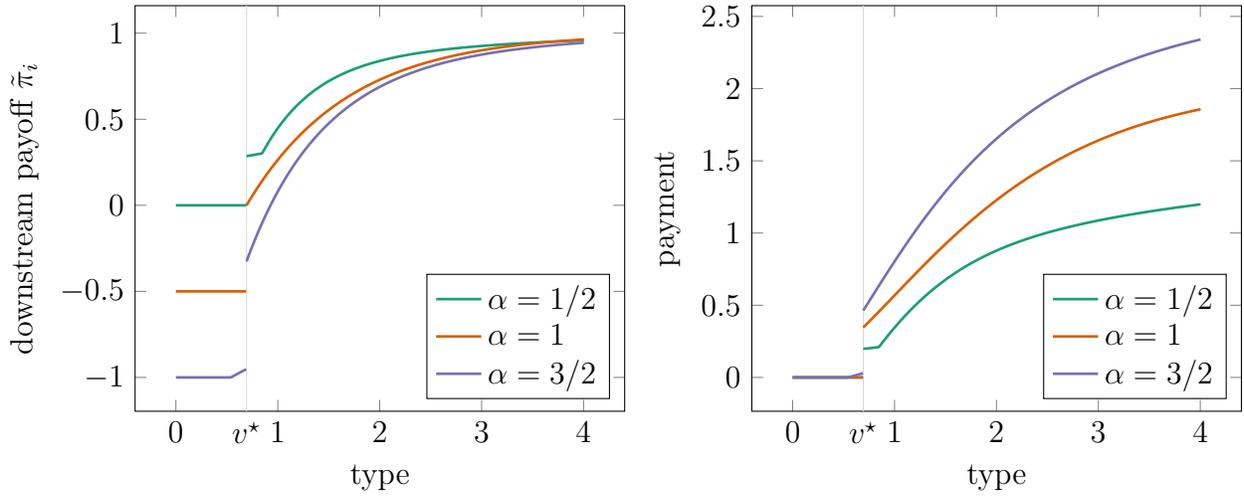

\begin{figure}[!ht]
\noindent\begin{tikzpicture}[baseline]
	\begin{axis}[
		xlabel=type,
		ylabel=downstream payoff $\tpi_i$,
		width=0.5\textwidth,
		cycle list/Dark2,
		legend cell align=left,
		legend entries={$\alpha=1/2$, $\alpha=1$, $\alpha=3/2$},
		legend pos=south east,
		extra x ticks=1.38,
		extra x tick labels=$v^\star$,
		extra x tick style={grid=major, grid style={solid,gray!30},tick style={draw=none}},
	]
		%\addplot+[line width=1pt,mark=none] table [col sep=comma,x=value,y=share0.25] {mech-pmax-point75.dat};
		\addplot+[line width=1pt,mark=none] table [col sep=comma,x=value,y=share0.5] {mech-exp-pmax-0.75.dat};
		\addplot+[line width=1pt,mark=none] table [col sep=comma,x=value,y=share1] {mech-exp-pmax-0.75.dat};
		\addplot+[line width=1pt,mark=none] table [col sep=comma,x=value,y=share1.5] {mech-exp-pmax-0.75.dat};
	\end{axis}
\end{tikzpicture}
\hfill
\begin{tikzpicture}[baseline]
	\begin{axis}[
		xlabel=type,
		ylabel=payment,
		width=0.5\textwidth,
		cycle list/Dark2,
		legend cell align=left,
		legend entries={$\alpha=1/2$, $\alpha=1$, $\alpha=3/2$},
		legend pos=south east,
		extra x ticks=1.38,
		extra x tick labels=$v^\star$,
		extra x tick style={grid=major, grid style={solid,gray!30},tick style={draw=none}},
	]
		%\addplot+[line width=1pt,mark=none] table [col sep=comma,x=value,y=pay0.25] {mech-pmax-point75.dat};
		\addplot+[line width=1pt,mark=none] table [col sep=comma,x=value,y=pay0.5] {mech-exp-pmax-0.75.dat};
		\addplot+[line width=1pt,mark=none] table [col sep=comma,x=value,y=pay1] {mech-exp-pmax-0.75.dat};
		\addplot+[line width=1pt,mark=none] table [col sep=comma,x=value,y=pay1.5] {mech-exp-pmax-0.75.dat};
	\end{axis}
\end{tikzpicture}
\caption{Interim downstream payoff and payment. Same parameters as Figure \ref{fig:opt-mech-uniform} except that now $\pmax=3/4$ so that $v^\star=\ln 4>v_0$.}
\label{fig:opt-mech-nonuniform}
\end{figure}
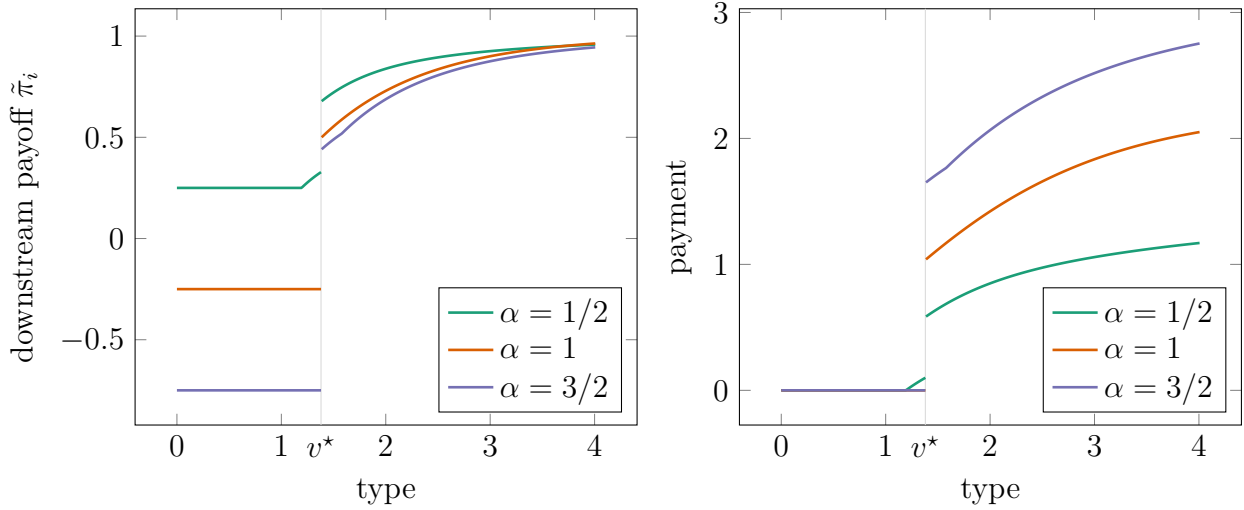

\begin{figure}[!ht]
\begin{tikzpicture}[baseline]
	\begin{axis}[
		xlabel=$\alpha$,
		width=0.54\textwidth,
		cycle list/Dark2,
		legend cell align=left,
		legend entries={$1^{\rm st}$ best,$2^{\rm nd}$ best,Revenue},
		legend pos=north east,
		title={$p_{\rm max}=1/2$},
	]
	\addplot+[line width=1pt,mark=none] table [col sep=comma,x=alpha,y=firstbest] {competition-pmax-point5.dat};
	\addplot+[line width=1pt,mark=none] table [col sep=comma,x=alpha,y=secondbest] {competition-pmax-point5.dat};
	\addplot+[line width=1pt,mark=none] table [col sep=comma,x=alpha,y=revenue] {competition-pmax-point5.dat};
	\end{axis}
\end{tikzpicture}\hfill\begin{tikzpicture}[baseline]
	\begin{axis}[
		xlabel=$\alpha$,
		width=0.54\linewidth,
		cycle list/Dark2,
		legend cell align=left,
		legend entries={$1^{\rm st}$ best,$2^{\rm nd}$ best,Revenue},
		legend pos=north east,
		title={$p_{\rm max}=3/4$},
	]
	\addplot+[line width=1pt,mark=none] table [col sep=comma,x=alpha,y=firstbest] {competition-pmax-point75.dat};
	\addplot+[line width=1pt,mark=none] table [col
	sep=comma,x=alpha,y=secondbest] {competition-pmax-point75.dat};
	\addplot+[line width=1pt,mark=none] table [col sep=comma,x=alpha,y=revenue]
	{competition-pmax-point75.dat};
	\end{axis}
\end{tikzpicture}
\caption{Welfare in the first best and second best mechanisms and revenue in
the optimal mechanism as a function of $\alpha$, for two different priors on the
unknown state.}
\label{fig:wel/rev-alpha}
\end{figure}
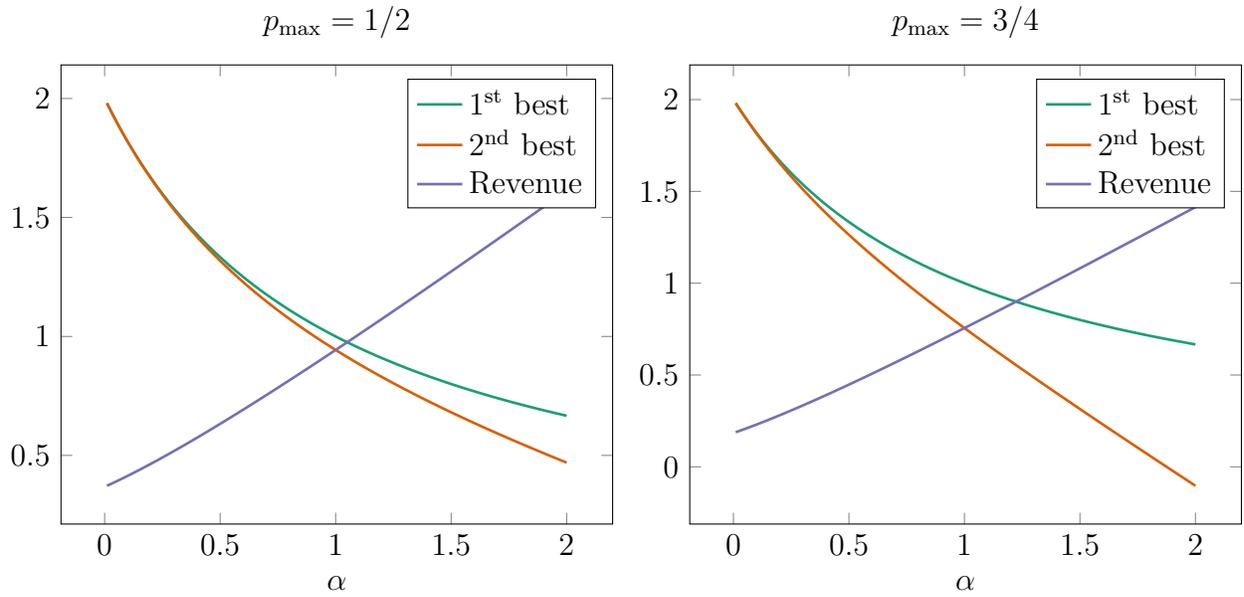

In Figure \ref{fig:opt-mech-uniform}, $v^\star<v_0$, downstream payoffs are non-decreasing as expected. When competition is more intense i.e. $\alpha=1/2$ the expected downstream payoff, $\tpi_i$, is smaller while payments are larger. This shows that the designer uses the competition between the firms as a tool to extract more surplus in exchange for the provision of exclusive information.

In Figure \ref{fig:opt-mech-nonuniform}, $v_0<v^\star$, the plots follow a similar pattern as Figure \ref{fig:opt-mech-uniform} except for types which are less than $v^\star$. For these types, fiercer competition results in lower payments. The increase in competition does not provide data buyers with exclusive information, and at the same time their expected downstream payoff decreases. Therefore, they do not have any incentive to pay more. However, Figure \ref{fig:wel/rev-alpha} suggests that the overall expected payment $\E\big[\tp_i(V_i)\big]$ increases with $\alpha$.

Figure \ref{fig:wel/rev-alpha} shows that welfare decreases as competition increases. By increasing $\alpha$ the entries of the payoff matrix decrease and as a result, the expected downstream payoff, $\tpi_i(V_i)$, and welfare decrease. On the other hand, revenue increases in a more competitive environment as the data seller can threaten each data buyer to provide exclusive information to their rivals.

\subsection{Number of Buyers}\label{sec:nplayers}

Another critical factor affecting the competition that a buyer faces in the downstream game is the total number of buyers. We now explore the impact of $n$ on the optimal mechanisms.

In contrast to previous sections where we focused on the case $n=2$ to visualize the impact of other factors, this requires considering the full generality of an $n$-dimensional type space, which is inherently hard to visualize. For this reason, we adopt the perspective of a single buyer $i\in[n]$ and study a two-dimensional slice of the type space parametrized by buyer $i$'s type $v_i$, and the sum $s_{-i} = \sum_{j\neq i} v_j$ of the other buyers' types. Conveniently, the recommendation to buyer $i$ in the welfare-optimal mechanism (\Cref{prop:second-best}) remains deterministic with this parametrization:
\begin{displaymath}
A_i=\theta\quad\text{if and only if}\quad
s_{-i}\leq\max\set{v^\star, (n-1)v_i/\alpha}.
\end{displaymath}
Similarly, buyer $i$'s recommendation in the revenue-optimal mechanism (\Cref{prop:revenue-alpha}) is deterministic in $\phi(v_i)$ and $s^{\phi}_{-i}\eqdef\sum_{j\neq i} \phi(v_j)$. In contrast, the externality induced by the recommendation to a buyer $j\neq i$ on buyer $i$ remains random even after conditioning on $v_i$ and $s_{-i}$ and depends on the conditional distribution of $v_j$ given $s_{-i}$.

Focusing on exponentially distributed type, $s_{-i}$ follows an Erlang distribution, for which the conditional distribution $v_j\given s_{-i}$ can be computed in closed-form. This allows us to visualize the externality induced by buyer $j$ on buyer $i$ by plotting the quantity $\P[A_j=\theta\given V_i, S_{-i}]$ as a function of the two parameters $(V_i, S_{-i})$. \Cref{fig:heatmaps} shows a heatmap of this function in the revenue-optimal mechanism for two different values of $n$.

\begin{figure}[!t]
	\includegraphics[height=0.38\textwidth]{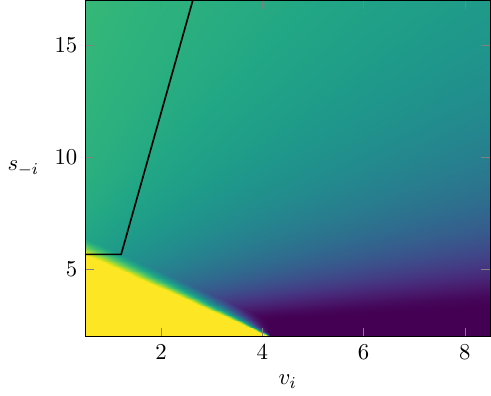}\hfill\includegraphics[height=0.395\textwidth]{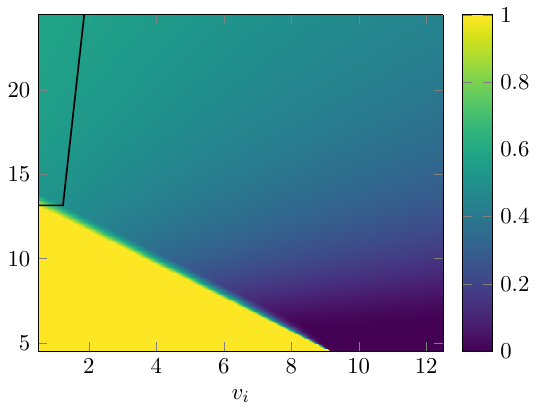}
	\caption{Heatmap of the function $(V_i, S_{-i})\mapsto \P[A_j=\theta\given V_i, S_{-i}]$ in the revenue-optimal mechanism for $n=5$ (left) and $n=10$ (right). Types are distributed as $1/2+Y$ where $Y$ is a standard exponential variable and $\alpha=\pmax=1/2$. The solid black line shows the boundary determining the recommendation to buyer $i$: $A_i=\theta$ below the line and $A_i=1-\theta$ above it.}
	\label{fig:heatmaps}
\end{figure}

Finally, we turn to the impact of the number $n$ of buyers on revenue and welfare. Specifically, we consider these objectives after normalization by $n$: by symmetry these correspond respectively to the utility (gross of any payments to the seller) and payment of a single buyer. As a way to assess the efficiency loss induced by the seller, we also consider the utility of a single buyer in the revenue-optimal mechanism. \Cref{fig:opt-mech-uniform-fb} shows how these quantities vary with $n$ for two different values of $\alpha$. As we see, all three quantities converge to a constant as $n$ grows to infinity. This can be heuristically explained as follows: because the externality term is normalized by $n-1$ in \eqref{eq:pi}, the optimal recommendation to buyer $i$, absent obedience constraints, is obtained by comparing their type $v_i$ to the \emph{average} of the other buyers' types. The law of large numbers then implies that in the large $n$ limit, the situation that buyer $i$ faces is identical to a competition with a \emph{single} buyer whose type is concentrated on the mean of the type distribution. Consistent with our findings from the previous section, we also see that as $\alpha$ increases---competition becomes fiercer---the payments increase while buyers' utilities decrease.

\usepgfplotslibrary{groupplots}
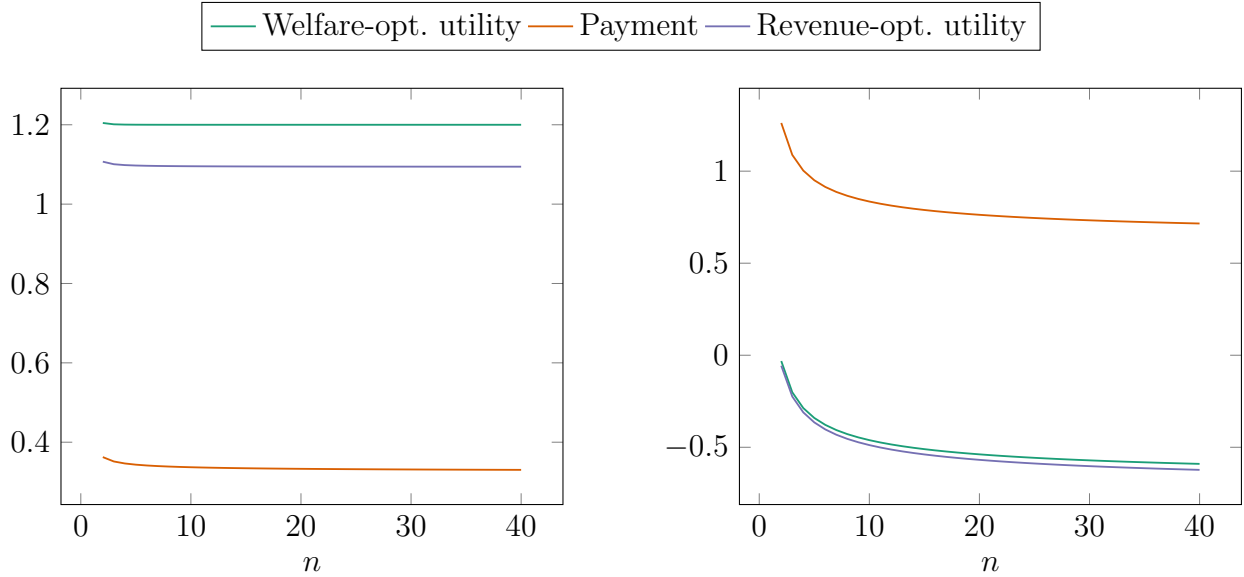
\begin{figure}
\begin{center}
\pgfplotslegendfromname{named}
\end{center}
\noindent\begin{tikzpicture}[baseline]
	\begin{axis}[
		xlabel={$n$},
		width=0.5\textwidth,
		cycle list/Dark2,
		legend cell align=left,
		legend entries={Welfare-opt.\ utility, Payment, Revenue-opt.\ utility},
		legend columns=-1,
		legend style={at={(0,1)}, anchor=south west},
		legend to name=named,
		%legend cell align=left,
		%legend entries={Welfare, Revenue, Welfare at revenue optimal},
		%legend style={at={(0.5,-0.1)},anchor=north},
		%extra x ticks=0.693,
		%extra x tick labels=$v^\star$,
		%extra x tick style={grid=major, grid style={solid,gray!30},tick style={draw=none}},
	]
		\addplot+[line width=0.7pt,mark=none] table [col sep=comma,x=n,y=Welfare] {WRWRon-alpha0.2-aon.dat};
		\addplot+[line width=0.7pt,mark=none] table [col sep=comma,x=n,y=Revenue] {WRWRon-alpha0.2-aon.dat};
		\addplot+[line width=0.7pt,mark=none] table [col sep=comma,x=n,y=Welfare at Revenue Optimal] {WRWRon-alpha0.2-aon.dat};
	\end{axis}
\end{tikzpicture}\hfill\begin{tikzpicture}[baseline]
	\begin{axis}[
		xlabel={$n$},
		width=0.5\textwidth,
		cycle list/Dark2,
		%legend style={at={(0.5,-0.1)},anchor=north},
		%extra x ticks=0.693,
		%extra x tick labels=$v^\star$,
		%extra x tick style={grid=major, grid style={solid,gray!30},tick style={draw=none}},
	]
		\addplot+[line width=0.7pt,mark=none] table [col sep=comma,x=n,y=Welfare] {WRWRon-alpha2-aon.dat};
		\addplot+[line width=0.7pt,mark=none] table [col sep=comma,x=n,y=Revenue] {WRWRon-alpha2-aon.dat};
		\addplot+[line width=0.7pt,mark=none] table [col sep=comma,x=n,y=Welfare at Revenue Optimal] {WRWRon-alpha2-aon.dat};
	\end{axis}
\end{tikzpicture}\hfill
\caption{Utility of a single buyer in the welfare- and revenue-optimal mechanisms (green and purple) and payment in the revenue-optimal mechanism (orange) for $\alpha = 0.2$ (left), and $\alpha = 2$ (right). Types are distributed as $1/2+Y$ where $Y$ is a standard exponential variable and $\pmax=1/2$.}
\label{fig:opt-mech-uniform-fb}
\end{figure}

%\begin{figure}
%\noindent\begin{tikzpicture}[baseline]
%	\begin{axis}[
%		xlabel=$v_i$,
%		ylabel=$S_{-i}$,
%           title={Contour plot, Externality},
%            view={0}{90}
%	]
%		\addplot3[contour filled={number = 30,labels={false}}, 
%            thick] table [x = X ,y = Y,z= Z, col sep=comma] {wexternality.dat};
%	\end{axis}
%\end{tikzpicture}\hfill\begin{tikzpicture}[baseline]
%	\begin{axis}[
%		xlabel=$v_i$,
%		ylabel=$S_{-i}$,
 %           title={Contour plot, Externality},
  %          view={0}{90}
%	]
%		\addplot3[contour filled={number = 30,labels={false}}, 
 %           thick] table [x = X ,y = Y,z= Z, col sep=comma] {rexternality.dat};
%	\end{axis}
%\end{tikzpicture}\hfill
%\caption{}
%\label{fig:opt-mech-uniform-fb}
%\end{figure}

\section{Conclusions}
We have explored the implications of selling information to competing buyers in a mechanism design framework. The nature of information disciplines the optimal mechanisms for selling data products and distinguishes them from canonical (e.g., physical) goods. In particular, the buyers' actions in the downstream game introduce obedience constraints into the designer's choice of mechanism. These constraints prevent a social planner from  implementing the efficient degree of information \emph{exclusivity}---the second-best mechanism involves symmetric allocations of information more often than optimal. At the same time, obedience also limits the allocation distortions introduced by  a monopolist seller of information---the revenue-optimal mechanism provides the correct information to the buyers more often than the monopolist would like. 

In the present work, we characterized optimal mechanisms in the context of a linear model with binary states and actions. Considerable work remains to be done to extend the applicability of this framework. Natural next steps include introducing asymmetric buyers and externality parameters, which could be represented, for example, by a weighted directed graph. Both  extensions can be analyzed in our framework. Finally, in many real-world markets, information generates \emph{nonlinear} externalities that also depend on all buyers' actions in the downstream game. In these settings,  the sale of information to competing buyers creates value not only by allowing buyers to match their actions to the state but also by enabling coordination. In ongoing work \citep{bdhn22b}, we pursue the coordination role of selling \emph{Gaussian} information sales in a linear-quadratic  model.

\newpage

\appendix
\newpage

\section{Missing proofs from Section~\ref{sec:ic-char}}\label{sec:app-char}

\begin{proposition}[Proposition~\ref{prop:IC characterization} restated]
	Under Assumption~\ref{ass:lin}, a mechanism is incentive compatible whenever it is truthful and obedient.
\end{proposition}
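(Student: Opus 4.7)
The plan is to reduce the double-deviation inequality of Definition~\ref{def:ic} to the composition of a single reporting deviation (controlled by truthfulness) and a single obedience deviation (controlled by obedience), using the linearity of payoffs together with the independence of $V_i$ from $(\theta, V_{-i})$ guaranteed by Assumption~\ref{ass:lin}.

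Concretely, I would fix buyer $i$, types $v_i, v'_i \in \V_i$, and a deviation $\delta:\A_i\to\A_i$. With $A\sim\sigma(\theta, v'_i, V_{-i})$, I introduce the interim quantities
\begin{align*}
\tpi_i^\delta(v'_i) \eqdef \E[\pi_i(\delta(A_i), A_{-i};\theta)\given B_i=v'_i], \qquad \tp_i(v'_i) \eqdef \E[p_i(\theta, v'_i, V_{-i})],
\end{align*}
and let $\tpi_i(v'_i)$ denote the obedient interim share (i.e., $\delta$ equal to the identity). By linearity of $u_i$ in $v_i$ and by independence of $V_i$ from $(\theta, V_{-i})$, the conditional expectation of $u_i-p_i$ on the event $\{V_i=v_i,\, B_i=v'_i\}$ equals $v_i\tpi_i^\delta(v'_i) - \tp_i(v'_i)$, since the distribution of $(A, \theta, V_{-i})$ on this event depends only on the report $v'_i$. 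The IC inequality thus reduces to $v_i\tpi_i(v_i) - \tp_i(v_i) \geq v_i\tpi_i^\delta(v'_i) - \tp_i(v'_i)$.

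Two single-deviation inequalities then assemble the bound. Truthfulness, applied to the pure reporting deviation $v_i\mapsto v'_i$ with obedient behavior, yields $v_i\tpi_i(v_i) - \tp_i(v_i) \geq v_i\tpi_i(v'_i) - \tp_i(v'_i)$. Obedience at type $v'_i$ against $\delta$ reads $v'_i\tpi_i(v'_i) \geq v'_i\tpi_i^\delta(v'_i)$; since the values represent non-negative profit margins, one can cancel $v'_i\geq 0$ to obtain $\tpi_i(v'_i) \geq \tpi_i^\delta(v'_i)$, and then multiply by $v_i\geq 0$ to reach $v_i\tpi_i(v'_i) \geq v_i\tpi_i^\delta(v'_i)$. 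Chaining the two inequalities delivers the claim.

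The main obstacle is the conditioning-set translation: the IC inequality conditions on $\{V_i=v_i,\, B_i=v'_i\}$, whereas obedience is stated conditional on $\{V_i=v'_i\}$. Assumption~\ref{ass:lin}'s independence of $V_i$ from $(\theta, V_{-i})$ is precisely what identifies the conditional distributions of $(A,\theta,V_{-i})$ on these two events, allowing the obedience constraint at report $v'_i$ to bind for a buyer whose true type is $v_i$. Without this independence, the coupling between type and report distribution would break the chain, which is exactly why the implication ``truthful plus obedient implies incentive compatible'' is specific to the independent-types, linear-payoff setting.
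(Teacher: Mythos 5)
Your proof is correct and follows essentially the same route as the paper's: use linearity to factor the type out of the payoff, use independence of $V_i$ from $(\theta, V_{-i})$ to identify the conditional law of $(A,\theta,V_{-i})$ given $\{V_i=v_i, B_i=v'_i\}$ with that given the report alone, then chain the truthfulness inequality (reporting deviation, obedient play) with the obedience inequality (action deviation at report $v'_i$). If anything you are more explicit than the paper about the one delicate step, namely transferring obedience at type $v'_i$ to a buyer with true marginal value $v_i$ by cancelling $v'_i\geq 0$ and multiplying by $v_i\geq 0$ --- a rescaling the paper's displayed chain performs only implicitly.
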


\begin{proof}
	Consider a truthful and obedient mechanism. Then, using the notations of
	Definition~\ref{def:ic}, we have for each $(v_i,v_i')\in\V_i$:
	\begin{equation}\label{eq:foo-a1}
    \begin{split}
		\E\big[u_i(A;\theta,V)&-p_i(\theta,V)|V_i = v_i, B_i=v_i]\\
		&= \E\big[v_i\cdot\pi_i(A;\theta)-p_i(\theta,V)|V_i = v_i, B_i=v_i]\\
        &\geq \E\big[v_i\cdot\pi_i(A;\theta)-p_i(\theta,v_i',V_{-i})|V_i = v_i, B_i = v_i']\\
        &= \E\big[v_i\cdot\pi_i(A;\theta)-p_i(\theta,v_i',V_{-i})|B_i = v_i']\\
		&= v_i\cdot\E\big[\pi_i(A;\theta)\given B_i=v_i']
			-\E[p_i(\theta,v_i',V_{-i})|B_i = v_i'],
    \end{split}
\end{equation}
	where the first equality is by Assumption~\ref{ass:lin}, the inequality is by truthfulness (Definition~\ref{def:truthful}), and the third equality is because $A$ is independent of $V_i$ conditioned on $B_i$ by Assumption~\ref{ass:lin}.
	%and the last equality uses the form of the payoffs from Assumption~\ref{ass:lin}.

	Let $\delta:A_i\to A_i$ be a deviation function. By Assumption~\ref{ass:lin}, obedience
	(\Cref{def:obedience}) is equivalent to 
	\begin{displaymath}
		\E[\pi_i(A;\theta)\given B_i=v_i]\geq \E[\pi_i(\delta(A_i),
		A_{-i};\theta)\given B_i=v_i]
	\end{displaymath}
	for all $v_i\in\V_i$. Applying this inequality for $v_i'$ in \eqref{eq:foo-a1} we
	obtain
	\begin{align*}
		\E\big[u_i(A;\theta,V)-p_i(\theta,V)|V_i = v_i, B_i=v_i]
			&\geq v_i\cdot\E\big[\pi_i(\delta(A_i), A_{-i},\theta)\given
				B_i=v_i']\\
			&\quad-\E[p_i(\theta,v_i',V_{-i})|B_i = v_i']\\
			&\geq \E\big[u_i(\delta(A_i), A_{-i};\theta, V)\given V_i=v_i,
				B_i=v_i']\\
			&\quad-\E[p_i(\theta,v_i',V_{-i})|V_i=v_i, B_i = v_i'],
	\end{align*}
	which is precisely the definition of incentive compatibility. 
\end{proof}

\begin{proposition}[\Cref{prop:expected-payment} restated]
	The mechanism $(\sigma, p)$ is truthful if and only if for each buyer $i$:
	\begin{enumerate}
		\item The interim downstream payoff $\tpi_i$ is non-decreasing.
		\item The interim payment $\tp_i$ satisfies, for all $v_i\in\V_i$,
			\begin{equation}\label{eq:truthful-payment-appendix}
				\tp_i(v_i) = v_i\cdot\tpi_i(v_i) - \uv\cdot\tpi_i(\uv)
				+ \tp_i(\uv) - \int_{\uv}^{v_i} \tpi_i(s)ds\,.
			\end{equation}
	\end{enumerate}
\end{proposition}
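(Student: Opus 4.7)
The plan is to exploit Assumption~\ref{ass:lin} to reduce truthfulness to the classical Myersonian one-dimensional screening problem, and then invoke standard envelope-theorem arguments. First I would observe that because $V_i$ is independent of $(\theta, V_{-i})$ and the mechanism's internal randomization conditional on its inputs, the interim expected payoff of type $v_i$ reporting $v_i'$ factors cleanly as $v_i\cdot\tpi_i(v_i') - \tp_i(v_i')$, where both $\tpi_i(v_i')$ and $\tp_i(v_i')$ depend only on the report through expectations over $(\theta, V_{-i})$ with $B_i = v_i'$. Consequently, the truthfulness inequality of Definition~\ref{def:truthful} collapses to
\begin{displaymath}
	v_i\cdot\tpi_i(v_i) - \tp_i(v_i) \;\geq\; v_i\cdot\tpi_i(v_i') - \tp_i(v_i')
	\quad\text{for all } (v_i,v_i')\in\V_i^2.
\end{displaymath}
Letting $U_i(v_i)\eqdef v_i\cdot\tpi_i(v_i)-\tp_i(v_i)$ denote the interim truth-telling rent, this is a standard single-agent mechanism design problem.

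For the necessity direction, I would first add the two IC inequalities obtained by having types $v_i$ and $v_i'$ swap reports; this yields $(v_i'-v_i)(\tpi_i(v_i')-\tpi_i(v_i))\geq 0$, giving monotonicity of $\tpi_i$. For the payment formula, note that $U_i$ is the pointwise supremum over $v_i'$ of the affine functions $v_i\mapsto v_i\cdot\tpi_i(v_i')-\tp_i(v_i')$, hence convex, and truthfulness identifies $\tpi_i(v_i)$ as a subgradient of $U_i$ at $v_i$. Since $\tpi_i$ is monotone and bounded (as an interim expectation of a bounded quantity), $U_i$ is absolutely continuous with $U_i'=\tpi_i$ almost everywhere, so the fundamental theorem of calculus gives
\begin{displaymath}
	U_i(v_i) = U_i(\uv) + \int_{\uv}^{v_i}\tpi_i(s)\,ds.
\end{displaymath}
Substituting $U_i(\uv)=\uv\cdot\tpi_i(\uv)-\tp_i(\uv)$ and solving for $\tp_i(v_i)$ yields exactly \eqref{eq:truthful-payment-appendix}.

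For the sufficiency direction, I would take monotonicity of $\tpi_i$ and the payment formula as given, and verify IC directly. The formula implies $U_i(v_i)-U_i(v_i') = \int_{v_i'}^{v_i}\tpi_i(s)\,ds$, so
\begin{displaymath}
	U_i(v_i) - \big(v_i\tpi_i(v_i') - \tp_i(v_i')\big)
	= U_i(v_i) - U_i(v_i') - (v_i-v_i')\tpi_i(v_i')
	= \int_{v_i'}^{v_i}\big(\tpi_i(s)-\tpi_i(v_i')\big)\,ds,
\end{displaymath}
which is non-negative by monotonicity of $\tpi_i$, irrespective of the sign of $v_i-v_i'$.

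The main obstacle is not conceptual but bookkeeping: one must carefully track which expectations are over $(\theta,V_{-i})$ with a fixed report versus over the randomness in $A$ induced by $\sigma$, and invoke Assumption~\ref{ass:lin} to collapse these to the interim quantities $\tpi_i$ and $\tp_i$ that depend solely on the agent's own report. Once this reduction is made, the argument is classical Myerson and requires no ideas specific to the information-design setting.
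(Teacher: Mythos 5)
Your proposal is correct and follows essentially the same route as the paper: both reduce truthfulness to the one-dimensional inequality $v_i\tpi_i(v_i)-\tp_i(v_i)\geq v_i\tpi_i(v_i')-\tp_i(v_i')$, identify $\tpi_i(v_i)$ as a subgradient of the interim utility, and read off monotonicity plus the integral payment formula from the standard convexity characterization. The only difference is that the paper cites that characterization as a black box, whereas you unpack both directions (adding the swapped IC inequalities for monotonicity, and the direct verification for sufficiency), which is a fuller but not materially different argument.
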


\begin{proof}
	Define $\tu_i:v_i\mapsto v_i\cdot \tpi_i(v_i) - \tp_i(v_i)$.
	Then truthfulness is equivalent to
	\begin{displaymath}
		\tu_i(v_i) = v_i\cdot \tpi_i(v_i) - \tp_i(v_i)
		\geq v_i\cdot\tpi_i(v_i') - \tp_i(v_i')
		= \tu_i(v_i') + (v_i-v_i')\cdot\tpi_i(v_i')\,,
	\end{displaymath}
	for all $(v_i, v_i')\in\V_i^2$. This is equivalent to saying that
	$\tpi_i(v_i)\in\partial \tu_i(v_i)$ for all $v_i\in\V_i$ where
	$\partial\tu_i(v_i)\subset\R$ denotes the subdifferential of $\tu_i$ at
	$v_i$. By a well-known characterization of convexity, this in turn
	equivalent to saying that $\tpi_i$ is non-decreasing and that 
	\begin{displaymath}
		\tilde u_i(v_i) = \tilde u_i(\uv) + \int_{\uv}^{v_i} \tpi_i(s)ds\,.
	\end{displaymath}
	This concludes the proof since this last expression is equivalent to
	\eqref{eq:truthful-payment-appendix}.
\end{proof}

\begin{proof}[Proof of Proposition \ref{lemm:char}]
	Since there are two states, obedience states that for all $i\in[n]$ and all $a_i\in\Theta$, 
\begin{displaymath}
	\E[\pi_i(a_i, A_{-i};\theta)-\pi_i(1-a_i, A_{-i};\theta)\given A_i=a_i,
	V_i]\geq 0.
\end{displaymath}
Using the form of $\pi_i$ from \Cref{ass:ext}, we observe that the externality
terms cancel out and the previous inequality is equivalent to
\begin{displaymath}
	\P[\theta=a_i\given A_i=a_i, V_i] \geq \P[\theta=1-a_i\given A_i=a_i, V_i],
\end{displaymath}
and by Bayes' rule to
\begin{displaymath}
	\P[\theta=a_i\land A_i=a_i\given V_i] \geq \P[\theta=1-a_i\land A_i=a_i\given V_i]
	\,.
\end{displaymath}
Adding the quantity $\P[\theta=1-a_i\land A_i=1-a_i\given V_i]$ on both sides,
we obtain
\begin{displaymath}
	\P[A_i=\theta\given V_i] \geq \P[\theta=1-a_i\given V_i]=\P[\theta=1-a_i]
	\,,
\end{displaymath}
where the last equality uses that $\theta$ and $V_i$ are independent by
Assumption~\ref{ass:lin}. Taking a maximum over $a_i\in\Theta$ yields the
lemma's statement.
\end{proof}

\section{Missing proofs from Section~\ref{sec:meta}}\label{sec:app-meta}
% \begin{lemma}[Recommendation Rule from Marginals]\label{lemm:param}\tbo{check proof}
% Let $h_i$ be measurable functions from $\V$ to $[0,1]$ for $i\in[n]$, then there exists a recommendation rule $\sigma:\Theta\times\V\to\Delta(\A)$ such that almost surely,
% %\begin{displaymath}
% $\P[A_i=\theta\given V] = h_i(V)$ for $i\in[n]$.
% %\end{displaymath}
% \end{lemma}

%\begin{proof}See \Cref{sec:app-b}.\end{proof}

\begin{lemma}[Recommendation Rule from Marginals]\label{lemm:param}
Let $h_i$ be measurable functions from $\V$ to $[0,1]$ for $i\in[n]$, then there exists a recommendation rule $\sigma:\Theta\times\V\to\Delta(\A)$ such that almost surely,
$\P[A_i=\theta\given V] = h_i(V)$ for $i\in[n]$.
\end{lemma}

\begin{proof}
	Given functions $h_i$ satisfying the lemma's assumptions, one can choose
	for example $\sigma$ such that for all $(x,v)\in\Theta\times \V$, the
	distribution $\sigma(x,v)\in\Delta(\A)$ has independent coordinates with
	marginals given by $h_i$. Formally, we have for $(x,v)\in\Theta\times\V$
	and $a\in\A$,
	\begin{displaymath}
		\sigma(a;x,v) = \prod_{\mathclap{i:a_i=x}}
		h_i(v)\prod_{\mathclap{i:a_i\neq x}} \big(1-h_i(v)\big).\qedhere
	\end{displaymath}
\end{proof}

Before we prove \Cref{prop:master} we first state and prove a variational lemma
that solves the pointwise optimization problem that the optimal mechanism
reduces to.

\begin{lemma}[Variational Lemma]\label{lemm:inc}
	Let $(E,\mu)$ be a probability space and let $g:E\to\R$ be
	a $\mu$-integrable function whose level sets are $\mu$-null sets:
	$\mu(\set{v\in E\given g(v)=k})=0$ for all $k\in\R$. Consider the
	problem
	\begin{displaymath}
		\begin{aligned}
			\max_{h\in\F}&\;\;\L(h)\eqdef\int_{E} h\cdot g \,\mathrm{d}\mu\\
			\text{s.t.}&\;\int_{E} h \,\mathrm{d}\mu \geq c
		\end{aligned}
	\end{displaymath}
	where the optimization is over the set $\F$ of measurable functions
	$h:E\to\R$ with $h(E)\subseteq [0,1]$ and $c\in[0,1]$ is a constant. For
	$k\in\R$, define $L_g^+(k)\eqdef\set{v\in E\given g(v)\geq k}$ the
	superlevel set of $g$ of level $k$ and let $t_c \eqdef \sup\set{k\in\R\given
	\mu\big(L_g^+(k)\big)\geq c}$. Then, an optimal solution to the problem is
	given by $h^\star:v\mapsto\ind\set{g(v)\geq t^\star}$ where $t^\star
	= \min\set{t_c,0}$.
\end{lemma}

\begin{proof}
	First, note that $t_c$ is well-defined in the extended real line by adopting
	the usual convention $\sup\emptyset=-\infty$ and $\sup\R=+\infty$. If
	$k\leq k'$, we have $L_{k'}^+(g)\subseteq L_k^+(g)$ showing that the
	function $m:k\mapsto \mu\big(L_k^+(g)\big)$ is non-increasing. The identity
	$L_g^+(k)= \bigcap_{k'<k} L_g^+(k')$ implies
	\begin{displaymath}
		m(k) = \mu\big(L_g^+(k)\big) = \inf_{k'<k}\mu\big(L_g^+(k')\big)
		=\inf_{k'<k}m(k')
	\end{displaymath}
	and $m$ is thus left-continuous. Consequently, the threshold $t_c$ is
	characterized by the equivalence
	\begin{displaymath}
		k\leq t_c \Longleftrightarrow \mu\big(L_g^+(k)\big)\geq c.
	\end{displaymath}
	Furthermore, the identity $\set{v\in E\given g(v)>k}= \bigcup_{k'>k}
	L_{k'}^+(g)$ implies
	\begin{displaymath}
		\mu\big(\set{v\in E\given g(v)>k}\big)= \sup_{k<k'}\mu\big(L_g^+(k')\big)
		=\sup_{k<k'}m(k').
	\end{displaymath}
	But because the level sets of $g$ are $\mu$-null sets, we also have $m(k)
	= \mu\big(\set{v\in E\given g(v)>k}\big)$, hence the function $m$ is
	continuous, and the threshold $t_c$ further satisfies
	$\mu\big(L_g^+(t_c)\big)=c$. This implies that $h^\star$ is a feasible
	solution, indeed since $t^\star\leq t_c$ we have
	\begin{displaymath}
		\int_E h^\star\,\mathrm{d}\mu = \mu\big(L_g^+(t^\star)\big)
		\geq\mu\big(L_g^+(t_c)\big)=c.
	\end{displaymath}

	Next, for a feasible $h\in\F$ we have
	\begin{equation}\label{eq:fooo}
			\begin{split}
			\L(h^\star) - \L(h)
			&= \int_{L_g^+(t^\star)}(1-h)g\,\mathrm{d}\mu
			+\int_{E\setminus L_g^+(t^\star)} (-h)g\,\mathrm{d}\mu\\
			&\geq t^\star\int_{L_g^+(t^\star)}(1-h)\,\mathrm{d}\mu
			+t^\star\int_{E\setminus L_g^+(t^\star)} (-h)\,\mathrm{d}\mu\\
			&=t^\star\mu\big(L_g^+(t^\star)\big) - t^\star\int_E h\,\mathrm{d}\mu
			\geq t^*\Big[\mu\big(L_g^+(t^\star)\big)-c\Big],
			\end{split}
		\end{equation}
		where the first equality is by definition of $h^\star$, the subsequent
		equality uses that $h$ takes values in $[0,1]$ and the fact that
		$g(v)\geq t^\star$ iff $v\in L_g^+(t^\star)$, and the last inequality
		uses that $t^\star\leq 0$ by definition and that $\int_E
		h\,\mathrm{d}\mu\geq c$ by feasibility.

		Either $t^\star = 0$, or $t^\star = t_c$ in which case
		$\mu\big(L_g^+(t^\star)\big) = c$. In both cases, the last expression
		in \eqref{eq:fooo} vanishes, hence $\L(h^\star)\geq \L(h)$ for
		all feasible $h\in\F$ which concludes the proof.
\end{proof}

We can now describe the optimal mechanism.
\begin{proposition}[Prop.~\ref{prop:master} restated]
%\tbo{check proof}
Under Assumptions~\ref{ass:lin} and \ref{ass:ext}, consider an objective $W$ of the form \eqref{eq:meta} where for $i\in[n]$, $w_i:\V\to\R$ is a measurable function such that the random variable $w_i(v_i, V_{-i})$ is nonatomic for each $v_i\in\V_i$. For $i\in[n]$ let $t_i^\star:\V_i\to\R$ be such that for all $v_i\in\V_i$,
\begin{displaymath}
	\P\big[w_i(v_i, V_{-i})\geq t_i^*(v_i)\big]=\max_{k\in\set{0,1}}\P[\theta=k]=:\pmax.
\end{displaymath}
Then the deterministic recommendation rule given by
\begin{displaymath}
A_i=\theta\quad\text{if and only if}\quad
w_i(v)\geq \min\set{0,t_i^\star(v_i)}
\end{displaymath}
for $i\in[n]$, maximizes $W$ subject to obedience.
\end{proposition}

\begin{proof}
First note that since $w_i(v_i, V_{-i})$ is non-atomic, its c.d.f. is continuous for all $v_i\in\V_i$. Hence, defining $c\eqdef \max_{k\in\set{0,1}}\P[\theta=k]$, a suitable threshold function $t_i^\star$ can be obtained by choosing for $v_i\in\V_i$,
\begin{displaymath}
t_i^\star(v_i) \eqdef \sup\set[\big]{k\in\R\given \P[w_i(v_i, V_{-i})\geq k]\geq c}.
\end{displaymath}
Define $h_i:\V\to[0,1]$ by $h_i(V)=\P[A_i=\theta\given V]$. By \Cref{lemm:char} and using the law of iterated expectations, the obedience constraint for buyer $i\in[n]$ states that
\begin{displaymath}
\P[A_i=\theta\given V_i] = \E[\ind\set{A_i=\theta}\given V_i]
	= \E\big[\E[\ind\set{A_i=\theta}\given V]\given
	V_i\big]=\E[h_i(V)\given V_i]\geq c\,
\end{displaymath}
almost surely for $V_i$. Similarly we can rewrite the objective \eqref{eq:meta} in terms of the functions $h_i$ and the optimization problem we need to solve can thus be written
\begin{align*}
	\max&\;\sum_{i=1}^n\E\big[w_i(V)h_i(V)\big]\\
	\text{s.t.}&\;\E[h_i(V)\given V_i]\geq c,\;\text{for $i\in[n].$}
\end{align*}

Because both the objective function and the constraints are separable in $i$,
this problem decomposes as $n$ separate optimization problem, one for
each $h_i$, $i\in[n]$:
\begin{align*}
	\max&\;\E\big[\E[w_i(V_i, V_{-i})h_i(V_i, V_{-i})\given V_i]\big]\\
	\text{s.t.}&\;\E[h_i(v_i, V_{-i})]\geq c,\;\text{for all $v_i\in\V_i$}
\end{align*}
where we used the law of total expectation and the independence of $(V_1,\dots,V_n)$. Finally, since the constraint is a pointwise constraint for $v_i\in\V_i$, the optimal $h_i$ is obtained by choosing the partial function $h_i(v_i,\cdot)$ so as to maximize the integrand in the objective function for each $v_i$. That is, $h_i(v_i,\cdot)$ should solve
\begin{align*}
	\max&\;\E\big[w_i(v_i, V_{-i})h_i(v_i, V_{-i})\big]\\
	\text{s.t.}&\;\E[h_i(v_i, V_{-i})]\geq c.
\end{align*}
This problem is exactly of the form solved in \Cref{lemm:inc}, with $\mu$ being the probability distribution of $V_{-i}$ and with $g = w_i(v_i,\cdot)$ and $E=\R^{n-1}$. By construction, the threshold $t_i^\star(v_i)$ plays the role of $t_c$ in \Cref{lemm:inc} for this choice of function $g$. Hence, the optimal policy is given by
\begin{displaymath}
h_i(v_i, v_{-i}) = \ind\set[\big]{w_i(v_i, v_{-i})\geq\min\set{0, t_i^\star(v_i)}}\text{ for each $v_i\in\V_i$ },
\end{displaymath}
which is the deterministic rule in the proposition statement.
\end{proof}

\section{Missing proofs from Section~\ref{sec:welfare}}\label{sec:app-welfare}

\begin{proposition}[Prop.~\ref{prop:second-best} restated]
Consider the binary game with additive payoffs \eqref{eq:pi} under
\Cref{ass:lin}. Further assume that the buyers' types are identically
distributed with absolutely continuous c.d.f.\ $F$ and denote by $F^{(k)}$ the
c.d.f.\  of the sum of $k$ i.i.d.\ variables\footnote{$F^{(k)}$ can be computed
recursively with $F^{(1)} = F$ and $F^{(k+1)} = F^{(k)}\ast f$, where $\ast$
denotes the convolution product and $f$ is the p.d.f.\ associated with $F$.}
distributed according to $F$. Define $v^\star$ such that $F^{(n-1)}(v^\star) =
\pmax$ and $\overline\alpha\eqdef\frac\alpha{n-1}$.

Then, the recommendation rule maximizing social welfare subject to obedience is the deterministic rule given by
\begin{displaymath}
A_i=\theta\quad\text{if and only if}\quad
\sum_{j\neq i} v_j\leq\max\set{v^\star, v_i/\overline\alpha}.
\end{displaymath}
\end{proposition}

\begin{proof}[Proof of Proposition \ref{prop:second-best}]
For the payoffs described by \eqref{eq:pi}, \Cref{ass:ext} holds and the expected welfare~\eqref{eq:welfare} is of the form \eqref{eq:meta} covered by \Cref{prop:master}, with weight function $w_i(v) = v_i-\overline\alpha\sum_{j\neq i} v_j$. Furthermore, the level sets of the partial function $w_i(v_i,\cdot)$ are hyperplanes in dimension $n$ and since the distribution of $V_{-i}$ is absolutely continuous by assumption (with c.d.f.\ $F^{(n-1)}$), the random variable $w_i(v_i, V_{-i})$ is non-atomic for each $v_i\in\V_i$. Hence, \Cref{prop:master} applies and the recommendation rule maximizing welfare subject to obedience is determined by
\begin{displaymath}
A_i=\theta\quad\text{if and only if}\quad
w_i(v)\geq \min\set{0,t_i^\star(v_i)}.
\end{displaymath}

All that remains is to compute the threshold function $t_i^\star(v_i)$. By
definition we must have
\begin{displaymath}
	\P\big[w_i(v_i, V_{-i})\geq t_i^\star(v_i)\big]
	=\Pr\left[\sum_{j\neq i} V_j \leq \frac{v_i-t_i^\star(v_i)}{\overline\alpha}\right]
	=F^{(n-1)}\left(\frac{v_i-t_i^\star(v_i)}{\overline\alpha}\right)
	=\pmax,%\max_{k\in\set{0,1}}\P[\theta=k],
\end{displaymath}
which by definition of $v^\star$ is equivalent to
$$\frac{v_i-t_i^\star(v_i)}{\overline\alpha} = v^\star.$$ The result in the statement follows after  observing that
$w_i(v)\geq\min\set{0,t_i^\star(v_i)}$ is equivalent to $\sum_{j\neq
i}v_j\leq\max\set{v^\star, v_i/\overline\alpha}$.
\end{proof}

\begin{proof}[Proof of Proposition \ref{prop:implementability}]
Using the definition of the optimal recommendation rule and
	linearity of conditional expectations we have
	\begin{align*}
		\tpi_i(V_i)
		&= \E\Big[\ind\set{A_i=\theta}-\overline\alpha\sum_{j\neq
	i}\ind\set{A_j=\theta}\given[\Big] V_i \Big]\\
		&= \E\bigg[\ind\set[\bigg]{\sum_{j\neq i}V_j\leq \max\set{v^\star,
V_i/\overline\alpha}}\given[\bigg] V_i \bigg]
		-\overline\alpha\sum_{j\neq i}\E\bigg[\ind\set[\bigg]{\sum_{k\neq j}V_k\leq
\max\set{v^\star, V_j/\overline\alpha}}\given[\bigg] V_i \bigg].
\end{align*}
The first expectation on the right-hand side can be computed as
\begin{displaymath}
\E\bigg[\ind\set[\bigg]{\sum_{j\neq i}V_j\leq \max\set{v^\star,
V_i/\overline\alpha}}\given[\bigg] V_i \bigg]
= \max\set[\big]{F^{(n-1)}(v^\star),F^{(n-1)}(V_i/\overline\alpha)}. 
\end{displaymath}
For $j\neq i$, the summand expectation can be computed using the law of total
expectations as
\begin{align*}
	\E\bigg[\ind\set[\bigg]{\sum_{k\neq j}V_k\leq
\max\set{v^\star, V_j/\overline\alpha}}\given[\bigg] V_i \bigg]
&= \E\Bigg[\E\bigg[\ind\set[\bigg]{\sum_{k\neq j}V_k\leq
\max\set{v^\star,V_j/\overline\alpha}}\given[\bigg] V_i, V_j \bigg]\given[\bigg]V_i\Bigg]\\
&= \E\Big[F^{(n-2)}\big(\max\set{v^\star,V_j/\overline\alpha}-V_i\big)\given[\Big]V_i\Big].
\end{align*}

The claim that $\tpi_i$ is a non-decreasing function easily follows from the observation that $v\mapsto\max\set[\big]{F^{(n-1)}(v^\star),F^{(n-1)}(v/\overline\alpha)}$ is non-decreasing (because $F^{(n-1)}$ is non-decreasing as a c.d.f.) and that for all $j\neq i$, the quantity $\E\Big[F^{(n-2)}\big(\max\set{v^\star,V_j/\overline\alpha}-v\big)\Big]$, is non-increasing as a function of $v$ because the integrand is non-increasing in $v$ pointwise.
\end{proof}

\section{Missing proofs from Section~\ref{sec:revenue}}\label{sec:app-revenue}

\begin{lemma}[\Cref{lem:vv} restated]
Let $\sigma$ be a communication rule for which the interim payoff $\tpi_i$ is non-decreasing for each buyer $i\in[n]$. Denote by $K$ the interim payoff of a non-participating buyer in their outside option and assume that $\tpi_i(\uv)\geq K$. Then:
	\begin{enumerate}
		\item if $p_i$ is a payment function that truthfully implements
			$\tpi_i$ (i.e., that satisfies \eqref{eq:truthful-payment-appendix}
			by \Cref{prop:expected-payment}—then $(\sigma,p)$ is individually
			rational if and only if it is individually rational at the lowest
			type, that is, $\tp_i(\uv)\leq \uv\cdot (\tpi_i(\uv) - K)$.
		\item among the payment functions $p_i$ implementing $\tpi_i$ in
			a truthful and individually rational manner, the revenue-maximizing
			one is given by
		\begin{equation}\label{eq:payments-restated}
			\tp_i(v_i) = v_i\cdot\tpi_i(v_i) - \uv\cdot K - \int_{\uv}^{v_i}\tpi_i(s)ds\,.
		\end{equation}
		For this payment function, the seller's revenue is $R=
		\sum_{i=1}^n\E\big[\phi(V_i)\tpi_i(V_i)\big]-n\uv\cdot K$.
	\end{enumerate}
\end{lemma}

\begin{proof}
	\begin{enumerate}
		\item Assume that $p_i$ truthfully implements $\tpi_i$, so that the
			interim utility $\tu_i$ is convex by \Cref{prop:expected-payment}.
			If $(\sigma, p)$ is individually rational, then it is obviously
			individually rational at the lowest type. For the converse
			direction, assume that $\tu_i(\uv)\geq \uv\cdot K$, then we have
			for all $v_i\in\V_i$
	\begin{displaymath}
		\tu_i(v_i)\geq \tu_i(\uv) + (v_i-\uv)\cdot\tpi_i(\uv)\geq v_i\cdot K\,,
	\end{displaymath}
	where the first inequality uses convexity of $\tu_i$ and
	the second inequality uses that $\tpi_i(\uv)\geq K$ and individual
	rationality at the lowest type.

\item If $p_i$ implements $\tpi_i$ truthfully, then we already know by
	\Cref{prop:expected-payment} that it satisfies
	\eqref{eq:truthful-payment-appendix}, in which the only undetermined
	quantity is the interim payment at the lowest type $\tp_i(\uv)$. By 1., 
	individual rationality further constrains $\tp_i(\uv)\leq
	\uv\cdot(\tpi_i(\uv) - K)$, so the revenue-maximizing choice makes this
	constraint bind and we get \eqref{eq:payments-restated}.

	For this payment function, we write the seller's revenue as
\begin{align}
	R &= \E\Big[\sum_{i=1}^np_i(\theta,V)\Big] = \E\Big[\sum_{i=1}^n\E[p_i(\theta,V)
	\given V_i] \Big]=\E\Big[\sum_{i=1}^n \tp_i(V_i) \Big]\notag\\
	  &= \sum_{i=1}^n\E\bigg[V_i\cdot\tpi_i(V_i) - \uv\cdot K
		- \int_{\uv}^{V_i} \tpi_i(s)ds\bigg]\notag \\
	&= \sum_i\E\bigg[V_i\cdot\tpi_i(V_i) - \int_{\uv}^{V_i}
	\tpi_i(s)ds\bigg] -n\uv\cdot K.\label{eq:rev}
\end{align}
The expectation in the first summand is computed as follows
\begin{align*}
    \E\Big[V_i\cdot\tpi_i(V_i) - \int_{\uv}^{V_i} \tpi_i(s)ds\Big]
    & = \int_{\uv}^{\overline{v}}v_i\cdot\tpi_i(v_i)f(v_i)dv_i -\int_{\uv}^{\overline{v}} \int_{\uv}^{v_i} \tpi_i(s)f(v_i)dsdv_i\\
    & = \int_{\uv}^{\overline{v}}v_i\cdot\tpi_i(v_i)f(v_i)dv_i -\int_{\uv}^{\overline{v}} \int_{s}^{\overline{v}} \tpi_i(s)f(v_i)dv_ids\\
    & = \int_{\uv}^{\overline{v}}\left(v_i-\frac{1-F(v_i)}{f(v_i)}\right)\cdot\tpi_i(v_i)f(v_i)dv_i\\
    & = \E\big[\phi(V_i)\tpi_i(V_i)\big]\,,
\end{align*}
where the second equality uses Fubini's theorem and the last equality is the
definition of the virtual value function.\qedhere
\end{enumerate}
\end{proof}

\begin{proof}[Proof of Proposition \ref{prop:rev-implementability}]
The computation of the interim payoff $\tpi_i$ is identical to the one in the proof of \Cref{prop:implementability} after replacing the types with their virtual counterparts.

It is immediate from this expression that for all $v_i\in\V_i$,
\begin{displaymath}
	\tpi_i(v_i) \geq  F_\phi^{(n-1)}\big(\phi(v^\star)\big) - \alpha
	= \pmax - \alpha,
\end{displaymath}
where the inequality is obtained by upper-bounding $F_\phi^{(n-2)}$ by $1$ and
the equality is by definition of $v^\star$. Furthermore, for the outside
allocation described in the proposition statement, the best response of buyer
$i$ in case of non-participation is to play the action matching the most likely
state under the prior, resulting in an expected payoff
	$K \eqdef \pmax - \alpha$.

Indeed, the non-participating buyer will be correct with probability $\pmax$ while incurring an externality of $-\alpha$ from all the participating buyers (who then receive the correct action recommendation). The previous two equations combined show that $\tpi_i(v_i)\geq K$ and the conditions of \Cref{lem:vv} are thus satisfied. The revenue-maximizing payments subject to truthfulness and individual rationality are then given by \cref{eq:payments-bis}.
\end{proof}

\newpage
\bibliographystyle{apalike}
	\bibliography{main}

\end{document}